\def\II{\mathbb I}
\newtheorem{Lmm}{Lemma}
\newtheorem{Thm}{Theorem}
\newtheorem{Dfn}{Definition}
\newtheorem{Crl}{Corollary}
\newcommand{\expect}[1]{\left\langle #1 \right\rangle} 
\begin{document}

\title{
Tight scaling of key rate for differential-phase-shift quantum key distribution
}

\author{Akihiro Mizutani}
\affiliation{
Faculty of Engineering, University of Toyama, Gofuku 3190, Toyama 930-8555, Japan
}

\author{Toyohiro Tsurumaru}
\affiliation{Mitsubishi Electric Corporation, Information Technology R\&D Center,\\
5-1-1 Ofuna, Kamakura-shi, Kanagawa, 247-8501, Japan}

\begin{abstract}
The performance of quantum key distribution (QKD) protocols is evaluated based on the ease of implementation and 
key generation rate. Among major protocols, the differential-phase-shift (DPS) protocol has the advantage of 
simple implementation using a train of coherent pulses and a passive detection unit. Unfortunately, however,
 its key rate is known to be at least proportional to $\eta^2$ with respect to channel transmission $\eta\to0$. 
If one can only prove the rate proportional to $\eta^2$ and cannot improve the analysis beyond that, 
then the DPS protocol will be deemed inferior to other major protocols, such as the decoy BB84 protocol.
In this paper, we consider a type of DPS protocol in which the phase of each emitted block comprising $n$ pulses is randomized and significantly improve the analysis of its key rate. Specifically, we reveal that the 
key rate is proportional to $\eta^{1+\frac{1}{n-2}}$ and this rate is tight. 
This implies that the DPS protocol can achieve a key rate proportional to $\eta$ for a large number of $n$, which is the same scaling as the decoy BB84 protocol. 
Our result suggests that the DPS protocol can achieve a combination of both advantages of ease of implementation 
and a high key generation rate.
\end{abstract}

\maketitle

\section{Introduction}
\label{sec:intro}
Quantum key distribution (QKD) enables information-theoretically secure communication by distributing a secret key 
between Alice and Bob~\cite{Lo2014}. 
Among major QKD protocols, the differential-phase-shift (DPS) protocol~\cite{dps1,dps2} has 
an advantage of simple implementation involving a train of coherent pulses as well as the passive detection unit 
without dynamic switching of the basis for each received pulse. 
The simplicity of this protocol attracts intensive attentions from 
theory~\cite{curty,tsurumaru,wen2009,Endo_2022,wolf,tdps1,tdps2,tdps3,tamaki2012unconditional,Mizutani_2018} 
and experiments~\cite{ex1,ex2,ex3,tokyo} 
including field demonstration in the Tokyo QKD network~\cite{tokyo}. 
In the DPS protocol, Alice sends a train of coherent pulses and encodes bit information 
into the relative phase between the adjacent pulses. 
Such a train of coherent pulses can be generated by chopping continuously oscillating lasers into pulse shapes, which 
has a high affinity with current optical communication and an ability of high-speed transmission. 
Employing such coherent pulses is advantageous over other major protocols, such as the decoy 
BB84 protocol~\cite{decoy}, which employs phase-randomized coherent states generated by oscillating the laser for each pulse. 
To completely eliminate the phase coherence among the emitted pulses, the speed of transmission becomes slower than 
the one with a train of coherent pulses. 
In addition to the advantage of fast transmission, the DPS protocol has the advantage of being robust against source imperfections~\cite{tdps1,tdps2,tdps3}, and any independent and identical states can be securely employed.

While the DPS protocol has the advantage of fast and ease of implementation, 
the key generation rate under the information-theoretic security is known to be 
proportional to $\eta^2$ or higher in the limit $\eta\to0$ (namely, in the high-loss regime)~\cite{tdps1,tdps2,tdps3}.
Here, $\eta$ represents the transmission of the quantum channel between Alice and Bob.
This evaluation of the key rate is rather low, considering the fact that the theoretical limit of the key rate is 
of order $O(\eta)$ in the high-loss regime~\cite{Takeoka_2014,Pirandola2017}. 
The scaling of this theoretical limit is achieved by the decoy BB84 protocol~\cite{decoy}, and hence 
one may have to conclude that the DPS protocol is inferior to other major protocols in terms of the key rate. 
To claim the superiority, it is crucial to improve the evaluation of the scaling behavior, or the exponent of $\eta$. 
This importance 
can be seen, e.g., by noting that, if a protocol with the key rate being proportional to $\eta^2$ improves tenfold over a 
protocol proportional to $\eta$ at zero distance, this advantage disappears at 50km, and beyond that distance, 
the gap exponentially increases.

In this paper, we consider a type of DPS protocol studied in~\cite{tamaki2012unconditional,Mizutani_2018} 
where consecutive $n$ emitted pulses are regarded as a block and at most 
one bit of secret key is extracted from each block.
We establish the actual performance of this protocol by 
deriving the lower and upper bounds on the key rate and demonstrate that these bounds coincide 
when quantum bit error rate is zero. 
Specifically, we derive the upper bound by using the fact that 
the probability of Eve succeeding in finding all the relative phases between the adjacent emitted pulses must be less 
than Bob's detection rate to obtain a positive secret key rate.
Also, we derive the lower bound on the key rate by providing an information-theoretic security proof by reducing 
it to the proof with collective attacks using the de Finetti representation theorem~\cite{renner2006security}. 
As a result, we find that the key rate of the DPS protocol can achieve proportional to $\eta^{1+\frac{1}{n-2}}$ with 
$n\ge3$ and this rate is tight. 
Note that 
the examples of QKD protocols with matching the scalings of the lower and upper bounds are the decoy BB84 and coherent-one-way 
(COW) protocols. Specifically, the optimal key rate for the COW protocol is proportional to 
$\eta^2$~\cite{securityCOW,attackCOW}, 
and the optimal rate for the decoy BB84 protocol is proportional to $\eta$~\cite{decoy,Takeoka_2014,Pirandola2017}.
Our result means that the key rate of the DPS protocol is proportional to $\eta$ 
with a sufficiently large block size $n$, which is the same scaling as the decoy BB84 protocol~\cite{decoy} and 
the theoretical limit~\cite{Takeoka_2014,Pirandola2017} of the QKD protocol in the high-loss regime.

\begin{figure*}[t]
\centering
\includegraphics[width=12cm]{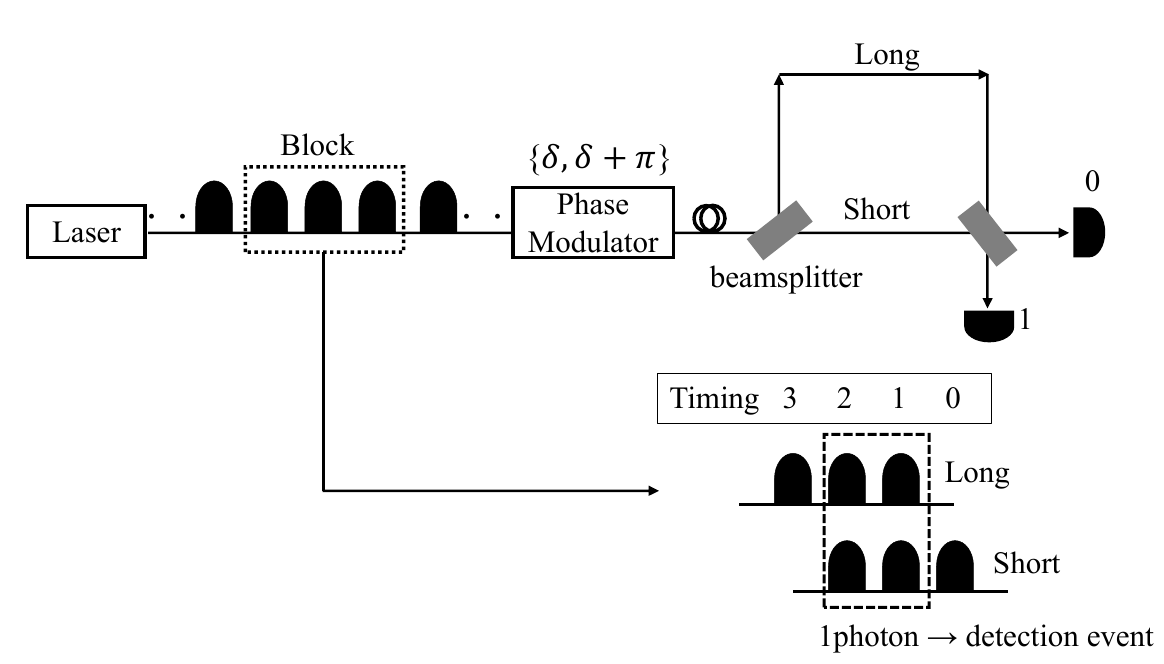}
\caption{
Setup of our DPS protocol with the block size $n=3$. The phase of each 
coherent pulse is modulated by a phase modulator that randomly modulates a phase $\delta$ or $\delta+\pi$ 
with $\delta$ being randomly chosen from $[0,2\pi)$ for each block. The emitted blocks are forwarded to a one-bit delay 
Mach-Zehnder interferometer with long and short paths and two 50:50 beamsplitters. 
The pulse trains leaving the interferometer are measured by two single 
photon detectors which report bit values 0 and 1. A detection event is defined as the event where Bob detects one photon 
in total among the timings 1, 2 and detects the vacuum in all the other timings.
}
 \label{fig:protocol}
\end{figure*}

\section{Protocol description}
\label{sec:setting}
Of various versions of the DPS protocols, 
we here particularly consider the one given in Ref.~~\cite{tamaki2012unconditional}. 
We denote by $n$ the block size with $n\ge3$, or the number of optical pulses per block. 
Our protocol is depicted in Fig.~\ref{fig:protocol}. 

\paragraph{States sent by Alice}
Alice generates random bits $\vec{s}:=s_1\dots s_n\in\{0,1\}^n$, random number $\delta\in[0,2\pi)$ 
and a block $\bigotimes_{i= 1}^n\ket{e^{\rm{i}\delta}(-1)^{s_i}\alpha}$ consisting of $n$ coherent states, 
and she sends the block to Bob.
As the uniform random phase $\delta$ is applied on the entire block here, for each block, the total photon number becomes a classical variable $\nu\in\{0,1,2,...\}$.
Thus, when viewed from the outside, Alice's operation above  is equivalent to emitting
\begin{eqnarray}
\ket{\psi_{\vec{s}\nu}}&:=&\frac1{\sqrt{\nu!}}\hat{a}_{\vec{s}}^{\dagger\nu}\ket{0},
\label{eq:psi}
\\
\hat{a}_{\vec{s}}&:=&\frac1{\sqrt{n}}\sum_{i=1,\dots,n}(-1)^{s_i}\hat{a}_i
\end{eqnarray}
with the probability
\begin{equation}
p_{\mu,\nu}=e^{-\mu}\frac{\mu^\nu}{\nu!},
\label{eq:prop}
\end{equation}
where $\ket{0}$ is the vacuum state and $\mu=n|\alpha|^2$. 
Here, $\hat{a}_i$ denotes the annihilation operator of the $i$th emitted state.

\paragraph{Bob's detection}
Bob then receives the block of pulses and measures it with single photon detectors at timings $i=0,1,\dots,n$.
If he detects one photon at one timing $i=1,\dots,n-1$ (i.e., except at both end timings) and no photon elsewhere, 
he considers it as a $i$th detection event and announces $i$; otherwise he announces ``no detection.''

\paragraph{Sifting}
When the $i$th detection event occurs, Bob defines his sifted key $b'$ depending on which of the two 
detectors clicks, and Alice defines her sifted key as
\begin{equation}
b=s_i\oplus s_{i+1}.
\end{equation}
If there is no detection, they define $b=b'=\bot$. 

\paragraph{Parameter estimation}
After repeating the steps above multiple times, based on the measurement results, Alice and Bob estimate pulse detection rate probability $r$ and the quantum bit error rate (QBER) $e$ in the detection events.

As customary in analyses of QKD protocols, throughout the paper we  assume that the quantum communication channel is amplitude damping.
In this case, $r$ should equal on average
\begin{equation}
r=\frac{n-1}{n}e^{-\eta\mu}\eta\mu
\label{eq:reaching_fraction}
\end{equation}
with $\eta$ being the transmission of the channel. 
That being said, however, this does not imply that our security proof assumes the quantum channel between Alice and Bob. 
If the estimated detection rate $r$ deviates from the expected value, Alice and Bob simply abort the protocol.

\paragraph{Key distillation}
Based on the result of the parameter estimation, Alice and Bob determine the specifications of the error correction code and the hash function for privacy amplification, and perform the key distillation to generate the secret key.

\section{main result}
In this section, we first introduce notations related to the upper and lower bounds on the key rate for our DPS 
protocol and state our main result of this paper. \\
\begin{Dfn}
\label{dfn:key_rate_exponent}
Let $G(n,e,\eta)$ be the supremum (lowest upper bound) of the key rate for block size $n$, QBER $e$, and the channel transmission $\eta$. 
If $G(n,e,\eta)$ is upper-bounded by $k\eta^c$ for some positive constants $k$ and $c$
in the high-loss regime $\eta\to0$, using the Landau notations we write
$$G(n,e,\eta)=O(\eta^c).$$
Similarly, if  $G(n,e,\eta)$ is lower-bounded by $k\eta^c$ for some positive constants $k$ and $c$
in the high-loss regime, we write
$$G(n,e,\eta)=\Omega(\eta^c).$$
\end{Dfn}
We give the rigorous definition of the key rate in Appendix~\ref{sec:def_key_rate_exponent}. 
With this notation, our main result can be stated as follows.

\begin{Thm}
\label{thm:main_theorem}
The key rate of the DPS protocol introduced in the Sec.~\ref{sec:setting} is upper-bounded as
\begin{align}
G(n,e,\eta)=O(\eta^{1+\frac{1}{n-2}}).
\label{eq:upper}
\end{align}
This bound is tight for $e=0$, as its lower bound is
\begin{align}
G(n,0,\eta)=\Omega(\eta^{1+\frac{1}{n-2}}).
\label{eq:lower}
\end{align}
\end{Thm}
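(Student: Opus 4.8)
The plan is to prove the two bounds separately, but both hinge on a single combinatorial-algebraic fact: among the $2^{n-1}$ states $\{\ket{\psi_{\vec s\nu}}\}$ labeled by the relative-phase pattern $(s_1\oplus s_2,\dots,s_{n-1}\oplus s_n)$, linear independence holds if and only if $\nu\ge n-1$. Since $\hat a_{\vec s}^\dagger\propto\sum_i(-1)^{s_i}\hat a_i^\dagger$, the state $\ket{\psi_{\vec s\nu}}$ is, up to normalization, the image of the homogeneous polynomial $(\sum_i\epsilon_i x_i)^\nu$ with $\epsilon_i=(-1)^{s_i}$. First I would establish the threshold: for $\nu\le n-2$ one exhibits an explicit dependency, e.g. $\sum_{\vec\epsilon}(\prod_i\epsilon_i)(\sum_i\epsilon_i x_i)^\nu=0$, because each monomial coefficient carries a factor $\sum_{\epsilon_i=\pm1}\epsilon_i^{k_i+1}$ that vanishes unless every $k_i$ is odd (impossible when $\sum_i k_i=\nu<n$); while for $\nu=n-1$ one checks independence directly (for $n=3$ the four power-quadratics form a Hadamard pattern in the cross-term coordinates). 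This threshold $\nu^\star=n-1$ is the origin of the exponent: Eve can unambiguously discriminate all relative phases precisely when the block carries at least $n-1$ photons.

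For the upper bound I would let Eve run unambiguous state discrimination (USD) on each intercepted block. By the lemma, USD of the full phase pattern succeeds with a strictly positive, $\mu$- and $\eta$-independent probability $P_{\mathrm{succ}}(\nu)$ whenever $\nu\ge n-1$, and never when $\nu\le n-2$. Hence the per-block probability that Eve learns every relative phase is $P_{\mathrm{Eve}}=\sum_{\nu\ge n-1}p_{\mu,\nu}P_{\mathrm{succ}}(\nu)=\Theta(\mu^{n-1})$ as $\mu\to0$, crucially with no factor of $\eta$ since the discrimination is performed before the lossy channel. On every such success Eve resends a bright state triggering a detection whose sifted bit she knows, so those events carry no secrecy. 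Because the key rate cannot exceed $r-P_{\mathrm{Eve}}=\Theta(\eta\mu)-\Theta(\mu^{n-1})$, with $r=\frac{n-1}{n}e^{-\eta\mu}\eta\mu$, maximizing this difference over $\mu$ (optimum at $\mu=\Theta(\eta^{1/(n-2)})$, where both terms scale as $\eta^{1+1/(n-2)}$) yields $G=O(\eta^{1+\frac1{n-2}})$, which is \eqref{eq:upper}.

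For the matching lower bound at $e=0$ I would give an information-theoretic security proof and exhibit one admissible choice $\mu=\theta\,\eta^{1/(n-2)}$, with $\theta$ a small constant, achieving the rate. Using that Alice's blocks are i.i.d. and the protocol is permutation-invariant, I would invoke the de Finetti representation theorem~\cite{renner2006security} to reduce coherent attacks to collective attacks, so the asymptotic rate obeys the Devetak--Winter bound $G\ge r\,[1-H(A|B)-\chi(A{:}E)]$, with $H(A|B)=0$ at $e=0$. Next I would decompose detected events by photon number. Blocks with $\nu\ge n-1$ are declared tagged and conceded to Eve; since she can produce detections from them at most as often as they occur, their conditional weight is $\Delta_{\mathrm{tag}}\le\Pr[\nu\ge n-1]/r=\Theta(\mu^{n-2}/\eta)$, which the choice $\mu=\theta\,\eta^{1/(n-2)}$ keeps below a constant strictly less than $1$. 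For the untagged blocks $\nu\le n-2$, the linear dependence from the lemma caps how much Eve can learn about the detected relative phase, so the associated phase-error rate $e_{\mathrm{ph}}$ stays bounded away from $1/2$ and the untagged secret fraction is a positive constant. Combining, $G\ge r\,(1-\Delta_{\mathrm{tag}})\,[1-h(e_{\mathrm{ph}})]=\Omega(\eta\mu)=\Omega(\eta^{1+\frac1{n-2}})$, which is \eqref{eq:lower}.

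The main obstacle is the untagged-leakage estimate inside the lower bound: turning the qualitative linear dependence for $\nu\le n-2$ into a quantitative bound on Eve's Holevo information, equivalently on the phase-error rate, that survives the de Finetti reduction with only sub-exponential overhead. By contrast the upper bound is comparatively robust, needing only the existence of a positive-probability USD for $\nu\ge n-1$ together with the loss-free nature of Eve's pre-channel measurement, which is exactly what produces the asymmetry $\mu^{n-1}$ versus $\eta\mu$ responsible for the exponent $1+\frac1{n-2}$.
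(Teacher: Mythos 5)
Your upper bound is sound and is essentially the paper's own argument: an intercept--resend attack whose success requires at least $n-1$ photons in a block, so that $P_{\mathrm{Eve}}=\Theta(\mu^{n-1})$ with no factor of $\eta$, and the requirement that this not exceed Bob's rate $r=\Theta(\eta\mu)$ forces $\mu=O(\eta^{1/(n-2)})$, whence $G\le r=O(\eta^{1+\frac{1}{n-2}})$; whether Eve measures by unambiguous state discrimination (your version) or with a copy of Bob's apparatus (the paper's version) is immaterial. One caveat: your stronger claim $G\le r-P_{\mathrm{Eve}}$ is not justified as stated, because on a discrimination failure Eve has destroyed the block and can only send vacuum, so intercepting every block depresses Bob's detection rate to $\approx P_{\mathrm{Eve}}\ll r$ and the protocol aborts; to get the subtraction you would need a compensation argument (intercept only a fraction of blocks and route the rest through a less lossy channel), or you can simply retreat to the paper's weaker dichotomy---either $P_{\mathrm{Eve}}\gtrsim r$, in which case $G=0$, or $P_{\mathrm{Eve}}<r$, in which case $\mu=O(\eta^{1/(n-2)})$ and $G\le r$---which already gives the scaling.

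The lower bound, however, has a genuine gap, and it sits exactly at the step you yourself flag as ``the main obstacle'': turning the linear dependence of $\{\ket{\psi_{\vec{s}\nu}}\}_{\vec{s}}$ for $\nu\le n-2$ into a quantitative, attack-independent positive bound on the secrecy of untagged detections. That step is the paper's central technical result (Lemma~\ref{lmm:HAE_lower_bound}), and your sketch supplies no mechanism for it; asserting that $e_{\mathrm{ph}}$ ``stays bounded away from $1/2$'' is a restatement of what must be proved. The paper's mechanism avoids phase errors entirely: for a zero-error attack one writes $\hat{U}_{\nu}(\ket{\psi_{\vec{s}\nu}}\otimes\ket{0})=\ket{\psi_{\vec{s}1}}\otimes\ket{\varphi_{\vec{s}\nu}}+\dots$, shows that $\dim{\rm Span}\{\ket{\varphi_{\vec{s}\nu}}\}_{\vec{s}}\le\dim{\rm Span}\{\ket{\psi_{\vec{s}\nu}}\}_{\vec{s}}<2^{n-1}$, and then runs a compactness argument: $H(A|E)$ is continuous on the compact set of admissible tuples $\{\ket{\varphi_{\vec{s}\nu}}\}_{\vec{s}}$, so its infimum is attained, and if the minimum were $0$ then by Lemma~\ref{lmm:HAE_zero} Eve's conditional states for $b=0$ and $b=1$ would have disjoint supports at every timing $i$, forcing $2^{n-1}$ linearly independent vectors $\ket{\varphi_{\vec{s}\nu}}$ and contradicting the dimension bound; hence a uniform constant $H_n>0$ exists. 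Two further problems: (i) Renner's de Finetti reduction requires finite-dimensional per-block systems, whereas Alice emits states of unbounded photon number, so it cannot be invoked ``directly'' as you do; the paper fixes this by modifying the protocol so that any block with $\nu\ge n-1$ photons is replaced by the classical string $\vec{s}$ handed to Eve---your tagging idea, but it must be built into the protocol \emph{before} the de Finetti reduction, not applied afterwards. (ii) Your explicit dependency $\sum_{\vec{\epsilon}}(\prod_i\epsilon_i)(\sum_i\epsilon_i x_i)^\nu=0$ is vacuous when $n+\nu$ is odd: since $\ket{\psi_{\bar{\vec{s}}\nu}}=(-1)^{\nu}\ket{\psi_{\vec{s}\nu}}$ for the complemented string, the terms for $\vec{\epsilon}$ and $-\vec{\epsilon}$ then cancel identically, so the identity gives no relation among the $2^{n-1}$ distinct states; in that parity case you need, e.g., the sign product restricted to $i\le n-1$, which is precisely the role of the ``properly chosen'' $t_n$ in the paper's Lemma~\ref{lmm:dim_span_psi}.
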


Takeoka et al. \cite{Takeoka_2014} and Pirandola et al.~\cite{Pirandola2017} 
showed that the upper bound on the key rate of any QKD protocol is of order $O(\eta)$ in the high-loss regime, 
and the decoy BB84 protocol~\cite{decoy} was shown to achieve this optimal key rate. 
Our Theorem~\ref{thm:main_theorem} states that the DPS protocol can also achieve the optimal key rate 
for a sufficiently large block size $n$. 
Recalling that the DPS protocol is easier to implement than other major protocols (cf. the first paragraph of Sec.~\ref{sec:intro}), this result suggests that the DPS protocol is particularly promising and advantageous. 

Note that in the previous works of the DPS protocol with block-wise phase randomization~\cite{tamaki2012unconditional,Mizutani_2018}, 
they showed that the key rate is proportional to $\eta^{3/2}$ by proving that the secret key can be generated from the 
events where Alice emits two photons in one block. 
Our lower bound in Eq.~(\ref{eq:lower}) is consistent with these existing results when $n=3$, 
but our theorem implies that the key rate can be improved when we increase the block size $n$. 
This is specifically because not only from two and three photon emission events whose security is discussed in~\cite{tamaki2012unconditional,Mizutani_2018}, we could also generate the secret key from $\nu$-photon 
emission events with $\nu\le n-2$. 

The rest of this paper is devoted to proving Theorem~\ref{thm:main_theorem}. 
In Sec.~\ref{sec:upperbound}, we derive the upper bound on the key rate shown in Eq.~(\ref{eq:upper}) by 
explicitly constructing Eve's attack. Next, in Sec.~\ref{sec:lowerbound}, we derive the lower bound on the 
key rate shown in Eq.~(\ref{eq:lower}). 
The main technical lemma for deriving the lower bound is Lemma~\ref{lmm:HAE_lower_bound}, which we 
prove in Sec.~\ref{sec:lemma1}. 
Lemma~\ref{lmm:HAE_lower_bound} states that 
the entropy of Alice's sifted key bit viewed by Eve when Alice emits $\nu$ ($\nu\le n-2$) photons in a block is positive, 
namely, when Eve has uncertainly about Alice's key, we could generate the secret key. 

\section{Upper bound on the key rate}
\label{sec:upperbound}
We begin by proving the first half of Theorem \ref{thm:main_theorem}.
That is, we prove Eq.~(\ref{eq:upper}) for any QBER $e\ge0$.

Consider the following attack:
Eve measures the state sent out by Alice, using the same measurement device as Bob's.
If Eve succeeds in detecting a photon in all timings $i=1,2,\dots,n-1$ and thereby finding all values $s_i\oplus s_{i+1}$, 
then Eve 
generates the correct single photon state $\ket{\psi_{\vec{s}1}}$ and sends it to Bob; otherwise sends the vacuum state to Bob. If the condition
\begin{itemize}
\item[i)] Success probability $P_{\rm Eve}$ of Eve's measurement\\
$\ge$ Bob's detection rate $r$ in Eve's absence\\ 
\quad [given in Eq.~(\ref{eq:reaching_fraction})]
\end{itemize}
is satisfied in this attack, then Eve successfully finds all information related with the sifted key, $s_i\oplus s_{i+1}$, without being noticed by Bob.
In other words, Eve succeeds in the perfect attack, and thus the key rate should vanish, $G(n,e,\eta)=0$. 
By taking its contraposition, the necessary condition for $G(n,e,\eta)>0$ is $P_{\rm Eve}<r$, which leads to 
\begin{align}
P_{\rm Eve}<r=\frac{n-1}{n}e^{-\eta\mu}\eta\mu\le\eta\mu.
\label{eq:ever}
\end{align}
If Alice emits the $n-1$-photon state $\ket{\psi_{\vec{s}n-1}}$, 
Eve has a non-zero chance of detecting a photon for all timings $i=1,2,\dots,n-1$ when using the same device as Bob's. 
Hence, we have $p_{\mu,n-1}q_{n-1}\le P_{\rm Eve}$ with $q_{n-1}>0$ 
denoting the probability that Eve succeeds in finding all $s_i\oplus s_{i+1}$ given $\ket{\psi_{\vec{s}n-1}}$. 
Applying this inequality to Eq.~(\ref{eq:ever}) along with Eq.~(\ref{eq:prop}) yields 
$e^{-\mu}\frac{\mu^{n-1}}{(n-1)!}q_{n-1}<\eta\mu$. 
Using the fact that $\mu$ is upper-bounded by some positive constant in the high-loss regime
\footnote{
We prove this by contradiction. 
If this does not hold, it means that even if $\eta\to 0$, $\mu$ is greater than some positive constant. 
This clearly contradicts Eq.~(\ref{eq:ever}) because $\eta\mu$ approaches 0, 
but $P_{\rm Eve}$ is still a certain positive constant.
}, we obtain the upper bound on Alice's pulse intensity $\mu$ using the Landau symbol as
\begin{align}
\mu=O(\eta^{\frac{1}{n-2}}).
\end{align}
Combining this relation and the fact that the key rate $G(n,e,\eta)$ 
should be less than Bob's detection rate $r$ ($\le\eta\mu$), we 
arrive at Eq.~(\ref{eq:upper}). This ends the proof of the first half of Theorem~\ref{thm:main_theorem}.

\section{lower bound on the key rate}
\label{sec:lowerbound}
Next, we prove the second half of Theorem \ref{thm:main_theorem}, namely, Eq.~(\ref{eq:lower}).

\subsection{Reduction to collective attacks}
We derive a lower bound on the key rate for the asymptotic total number of blocks $N\to \infty$. 
In so doing, if we could apply the de Finetti representation theorem~\cite{renner2006security} directly to our protocol, 
the analyses would become simple. 
This is because the theorem guarantees that an asymptotic key rate evaluated against collective attacks would also serve 
as that against coherent attacks (the most general attack). 

However, the de Finetti representation theorem cannot be directly applied to our protocol, since in our DPS protocol Alice sends out infinite dimensional states $\ket{\psi_{\vec{s}\nu}}$ for $\nu\ge0$.
To circumvent this problem, we consider the following variant of our protocol:
Recall that, as mentioned in Sec.~\ref{sec:setting}, 
Alice's operation involving Eq.~(\ref{eq:psi}) is equivalent to first choosing the classical random variable $\nu\in\{0,1,2,\dots\}$ with probability $p_{\mu,\nu}$, and then sending out a $\nu$-photon state $\ket{\psi_{\vec{s}\nu}}$.
We modify the operation there for $\nu\ge n-1$ as follows.
\begin{description}
\item[At-most-$n-2$-photon protocol]
If the state emitted by Alice's device contains 
at most 
$n-2$ photons (i.e., $\ket{\psi_{\vec{s}\nu}}$ with $\nu\le n-2$), then she sends it out as is.
Otherwise (i.e., if the state contains more than $n-1$ photons), she replaces it with the classical data 
$\vec{s}\in\{0,1\}^n$ and sends it out.
\end{description}
Below, we will often refer to this protocol simply as the modified protocol.
As we elaborate below,
\begin{enumerate}[label=Item \arabic*:]
\item this modified protocol allows the use of the de Finetti representation theorem, and
\item the key rate thus obtained serves as a lower bound on the asymptotic key rate for the original protocol.
\end{enumerate}

Item 1 is true because in this modified protocol, Alice always sends out finite-dimensional states, consisting of $n-2$ photons states, or $2^n$-dimensional state for sending $\vec{s}$.

More precisely, the DPS protocol under consideration fits within the framework of Sec.~6.5.1, Ref. \cite{renner2006security}, if we (i) regard one block in our modified protocol, consisting of $n$ pulses, as one big pulse, and (ii) rewrite the protocol, which has the prepare-and-measure structure, as an entanglement-based protocol with a fixed marginal on Alice's side (for details, see Ref.~\cite{George_2021}, Sec.~III. A). 
As a result, we can apply Corollary 6.5.2 of Ref. \cite{renner2006security} to our modified protocol, and show that its key rates against collective attacks and against coherent attacks match in the limit $N\to\infty$.

Item 2 above can be seen as follows.
Note that when Alice's device emits no less than $n-1$ photons, Eve is informed of all information $\vec{s}$ related with the sifted key.
Eve may use this $\vec{s}$ to generate the state that Alice was supposed to send in the original protocol.
This means that, in the modified protocol, Eve maintains the freedom to mount any attacks that are possible in the original protocol.
In other words, Eve is no less strong in the modified protocol than in the original protocol; thus the modification of the protocol does not increase the key rate.
Therefore, the lower bound on the key rate for the modified protocol also serves as that for the original protocol.

\subsection{Analysis against collective attacks}
\label{subsec:ana}
Thus we analyze collective attacks to our modified protocol.
The collective attacks here refer to those in which Eve performs the same quantum operation on each block, consisting of $n$ pulses, rather than on each pulse. 
In the following, we describe the modified protocol under which Eve performs the collective attacks 
and formulate the states and operations of Alice, Bob, and Eve at each step.

\subsubsection{Formulation}
\label{sec:formu}
\paragraph{States sent by Alice}

As explained in the previous subsection, in our modified protocol, Alice chooses a classical variable $\nu\in\{0,1,2,\dots, n-1\}$ randomly, and then sends out a $\nu$-photon state $\ket{\psi_{\vec{s}\nu}}$ if $\nu\le n-2$, and $\vec{s}\in\{0,1\}^n$ 
otherwise.
Since states $\ket{\psi_{\vec{s}\nu}}$ and $\vec{s}$ corresponding to different values of $\nu$ live in different Hilbert spaces, Eve can measure $\nu$ without affecting the states.
Therefore, without loss of security (i.e., without increasing the key rate), we may assume that $\nu$ was in fact announced to Eve by Alice in advance.
Hence the initial state prepared by Alice can be described as
\begin{eqnarray}
\hat{\rho}^{\rm ini}&=&\sum_{\nu=0}^{n-1}p'_{\mu,\nu}\hat{\rho}^{\rm ini}_\nu\otimes\ket{\nu}\bra{\nu}_{E_1},
\label{eq:rho_ini_def}
\end{eqnarray}
with
\begin{eqnarray}
\hat{\rho}^{\rm ini}_\nu&:=&\frac1{2^{n}}\sum_{\vec{s}\in\{0,1\}^n}\ket{\vec{s}}\bra{\vec{s}}_{A}\nonumber\\
&&\quad\otimes
\left\{
\begin{array}{lll}
\ket{\psi_{\vec{s}\nu}}\bra{\psi_{\vec{s}\nu}}_{B}&{\rm for}&0\le \nu\le n-2,\\
\ket{\vec{s}}\bra{\vec{s}}_B&{\rm for}&\nu=n-1,
\end{array}
\right.\\
p'_{\mu,\nu}&:=&
\left\{
\begin{array}{lll}
p_{\mu,\nu}&{\rm for}&0\le \nu\le n-2,\\
\sum_{\nu=n-1}^\infty p_{\mu,\nu}&{\rm for}&\nu=n-1.
\end{array}
\right.
\end{eqnarray}
Here, system $E_1$, storing the information of $\nu$, is possessed by Eve.

\paragraph{Eve's eavesdropping operation}
Eve's eavesdropping operations on the quantum channel can generally be described as quantum operations 
$\hat{M}^{\rm eav}:BE_1\to BE_1E_2$ on state $\hat{\rho}^{\rm ini}$ defined above.
We note here that, since the input $\hat{\rho}^{\rm ini}$ is block diagonalized as 
shown in Eq.~(\ref{eq:rho_ini_def}), we do not lose generality by assuming that, on each $\hat{\rho}^{\rm ini}_\nu$ having the 
photon number $\nu$, Eve individually applies an operation $\hat{M}^{\rm eav}_\nu:B\to BE_2$.

\paragraph{Sifting}
Subsequently, Alice and Bob respectively extract the sifted keys $b,b'\in\{0,1,\bot\}$ as well as the data regarding the presence/absence of detection, $i\in\{1,\dots,n-1,\bot\}$, 
by using the procedure described in Sec.~\ref{sec:setting}.
They then also announce $i$.
We denote these entire procedures as the quantum operation $\hat{M}^{\rm sif}:AB\to ABE_3$.
Hence the resulting state $\hat{\rho}$ can be written as 
\begin{eqnarray}
\hat{\rho}&=&\hat{M}^{\rm sif}\circ \hat{M}^{\rm eav}(\hat{\rho}^{\rm ini})\nonumber\\
&=&\sum_{\nu=0}^{n-1}p'_{\mu,\nu}(\hat{\rho}_{\nu})_{ABE_2E_3}\otimes\ket{\nu}\bra{\nu}_{E_1}
\end{eqnarray}
with
\begin{align}
\label{eq:rhonu}
(\hat{\rho}_{\nu})_{ABE_2E_3}=&\hat{M}^{\rm sif}\circ \hat{M}^{\rm eav}_\nu(\hat{\rho}^{\rm ini}_\nu)\nonumber\\
=&\sum_{b,b'=0,1,\bot}\ket{b}\bra{b}_A\otimes\ket{b'}\bra{b'}_B\nonumber\\
&\quad\otimes \sum_{i=1,\dots,n-1,\bot}(\hat{\rho}_{\nu bb' i})_{E_2}\otimes\ket{i}\bra{i}_{E_3}.
\end{align}
Here, Eve's system $E_3$ stores the information of $i\in\{1,\dots,n-1,\bot\}$ that is announced by Bob. 
Also, $(\hat{\rho}_{\nu bb' i})_{E_2}$ represents the state of Eve's remaining system (except for the systems 
storing the information of $\nu$ and $i$) when Alice emits $\nu$ photons, the result of Bob's detection is 
$i$, and Alice and Bob respectively obtain $b$ and $b'$.

\paragraph{Parameter estimation}
Alice and Bob evaluate the pulse arrival rate $P^{\rm det}$ and the error rate (pulse arrival rate $\times$ error rate) 
$P^{\rm err}$ of the state $\hat{\rho}$ after sifting, which are defined as
\begin{eqnarray}
P^{\rm det}&=&{\rm Tr}(\hat{\rho}^{\rm det}),\\
P^{\rm err}&=&{\rm Tr}(\hat{\rho}^{\rm err}).
\end{eqnarray}
Here, 
\begin{eqnarray}
\hat{\rho}^{\rm det}:=\sum_{i=1}^{n-1}\left(\ket{i}\bra{i}\right)_{E_3}\hat{\rho}
\end{eqnarray}
and
\begin{eqnarray}
\hat{\rho}^{\rm err}&:=&\sum_{\substack{b,b'=0,1,\\ b\ne b'}}{\rm Tr}_{AB}\left[\left(\ket{b}\bra{b}\right)_A\otimes\left(\ket{b'}\bra{b'}\right)_B\hat{\rho}\right]
\end{eqnarray}
are the unnormalized states corresponding to the pulse detection announcement (the state filtered under the condition that Bob announces pulse detection) and the unnormalized state corresponding to the bit errors, respectively.

\paragraph{Key distillation}
\label{sec:collective_key_distillation}
If $P^{\rm det}$ or 
$P^{\rm err}$ deviate from the natural values (i.e., those in Eve's absence) by more than a certain extent, Alice and Bob abort the protocol.
That is, they abort the protocol unless
\begin{eqnarray}
P^{\rm det}&\ge &f^{\rm det}r,
\label{eq:detection_rate}
\\
P^{\rm err}&\le &f^{\rm err}er,
\label{eq:error_rate}
\end{eqnarray}
where $r,e$ are the expected values of $P^{\rm det}$, $P^{\rm err}$ in the absence of Eve. 
For example, if the channel in Eve's absence is amplitude damping, $r$ is given by Eq.~(\ref{eq:reaching_fraction}).
The constants $f^{\rm det}$, $f^{\rm err}$ specify the degree to which the measured values of $P^{\rm det}$, $P^{\rm err}$ can deviate from $r, re$; e.g., if one tolerates deviation up to twice the natural values, $f^{\rm det}=1/2$, $f^{\rm err}=2$. 
Note that the security of the protocol, with determination of $f^{\rm det}$ and $f^{\rm err}$ according to the observed values 
of $P^{\rm det}$ and $P^{\rm err}$, is discussed in Appendix~\ref{sec:appVL}.

If they decide to continue the protocol, Alice and Bob perform key distillation on the obtained sifted key and generate the secret key.

\subsubsection{Key rate formula and some useful properties}
According to Corollary 6.5.2 of Ref. \cite{renner2006security}, the asymptotic key rate of our modified protocol 
formulated in Sec.~\ref{sec:formu} can be lower-bounded as
\begin{equation}
G(n,e,\eta)\ge \min_{\substack{\hat{M}^{\rm eav}\,{\rm subject\ to}\\ {\rm Eqs.}\ (\ref{eq:detection_rate}), (\ref{eq:error_rate})}}
\left(H(A|E)_{\hat{\rho}}-H(A|B)_{\hat{\rho}}\right),
\label{eq:collective_key_rate}
\end{equation}
where Eve's system $E$ consists of three parts: $E=E_1E_2E_3$. 

Here, given a quantum state $\hat{\rho}$ of systems $ABE$, the conditional quantum entropy $H(A|E)_{\hat{\rho}}$ 
is defined by
\begin{align}
\label{eq:defCE}
H(A|E)_{\hat{\rho}}:=H(AE)_{\hat{\rho}}-H(E)_{\hat{\rho}}
\end{align}
with
\begin{align}
H(AE)_{\hat{\rho}}:=H({\rm Tr}_B\hat{\rho}),~H(E)_{\hat{\rho}}:=H({\rm Tr}_{AB}\hat{\rho}),
\end{align}
where $H(\cdot)$ is the von Neumann entropy. Also, $H(A|B)_{\hat{\rho}}$ is defined similarly.
$H(A|E)_{\hat{\rho}}$ appearing on the right hand side of Eq.~(\ref{eq:collective_key_rate}) can further be bounded as
\begin{eqnarray}
H(A|E)_{\hat{\rho}}&=&{\rm Tr}(\hat{\rho}^{\rm det})H(A|E)_{\bar{\hat{\rho}}^{\rm det}}\nonumber\\
&=&\sum_{\nu=0}^{n-1} p'_{\mu,\nu}{\rm Tr}(\hat{\rho}_{\nu}^{\rm det})H(A|E)_{\bar{\hat{\rho}}_{\nu}^{\rm det}}
\nonumber\\
&\ge&\sum_{\nu=0}^{n-2} p'_{\mu,\nu} {\rm Tr}(\hat{\rho}_{\nu}^{\rm det})H(A|E)_{\bar{\hat{\rho}}_{\nu}^{\rm det}}.
\label{eq:lower_bound_HAE}
\end{eqnarray}
Here, $\hat{\rho}_{\nu}^{\rm det}$ denotes the unnormalized state corresponding to Alice emitting $\nu$ photons and Bob announces a detection, which is defined as
\begin{equation}
p'_{\mu,\nu}\hat{\rho}_{\nu}^{\rm det}
:={\rm Tr}_{E_1}\left[\left(\ket{\nu}\bra{\nu}\right)_{E_1}\hat{\rho}^{\rm det}\right].
\label{eq:p'}
\end{equation}
Hereafter, the overline on a state means that it is normalized, e.g., 
$\bar{\hat{\rho}}^{\rm det}=\left({\rm Tr}\left(\hat{\rho}^{\rm det}\right)\right)^{-1}
\hat{\rho}^{\rm det}$.

The equality of the first line of Eq.~(\ref{eq:lower_bound_HAE}) follows from the fact that, when Bob does not announce a detection, the corresponding sifted key, $b=\bot$, is known to Eve.
The equality of the second line holds because the photon number $\nu$ is also known to Eve.
The inequality in the third line is true because $\bar{\hat{\rho}}_{\nu}^{\rm det}$ is a classical-quantum (cq) state, 
and satisfies $H(A|E)_{\bar{\hat{\rho}}_{\nu}^{\rm det}}\ge0$.

In order to find a concrete lower bound on the key rate $G(n,e,\eta)$, we need to evaluate the right-hand side of Eq.~(\ref{eq:collective_key_rate}). 
In the statement of the second half of Theorem~\ref{thm:main_theorem}, as we assume QBER $e=0$, 
the second term of Eq.~(\ref{eq:collective_key_rate}) vanishes, $H(A|B)_{\hat{\rho}}=0$, and the first term 
$H(A|E)_{\hat{\rho}}$ serves as the lower bound on $G(n,e,\eta)$.
This term can be further bounded by the third line of Eq.~(\ref{eq:lower_bound_HAE}).
We note that the quantity $H(A|E)_{\bar{\hat{\rho}}_{\nu}^{\rm det}}$ appearing there satisfies the following.
\begin{Lmm}
\label{lmm:HAE_lower_bound}
For each $n\ge3$, there exists a constant $H_n>0$ for which the following inequality holds:
For any $\nu\le n-2$,
\begin{equation}
H(A|E)_{\bar{\hat{\rho}}_{\nu}^{\rm det}}\ge H_n,
\label{eq:HAE_nu_bounded_by_H}
\end{equation}
under any attack $\hat{M}^{\rm eav}$ by Eve satisfying $P^{\rm err}=0$ (i.e.,QBER $e=0$) on 
Alice's initial state $\hat{\rho}_\nu^{\rm ini}$.
\end{Lmm}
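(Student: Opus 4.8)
\emph{Proof proposal.} The plan is to establish the uniform bound in two stages: first show that $H(A|E)_{\bar{\hat{\rho}}_{\nu}^{\rm det}}>0$ strictly for every admissible attack (one with $P^{\rm err}=0$) that produces a nonzero detection probability, and then promote this pointwise positivity to a single constant $H_n>0$ by a compactness argument. For the compactness step I would first note that, since $\hat{\rho}_\nu^{\rm ini}$ is supported on the finite-dimensional $\nu$-photon sector of $n$ modes, Eve's operation $\hat{M}^{\rm eav}_\nu$ admits a Stinespring dilation by an isometry $V_\nu$ into $B\otimes E_2$ with $\dim E_2$ controlled by that sector. The admissible attacks---isometries obeying the closed constraint $P^{\rm err}=0$ for Bob's fixed detection POVM---then form a compact set, and on the region where the detection probability $\mathrm{Tr}(\hat{\rho}_\nu^{\rm det})$ is bounded away from $0$ the map $V_\nu\mapsto H(A|E)_{\bar{\hat{\rho}}_{\nu}^{\rm det}}$ is continuous, so its infimum is attained and, being strictly positive, yields $H_n$. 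The one delicate point here is the boundary where $\mathrm{Tr}(\hat{\rho}_\nu^{\rm det})\to0$, since the normalization of $\bar{\hat{\rho}}_{\nu}^{\rm det}$ could a priori degrade the bound; I expect to control this by bounding the relevant state overlaps independently of the detection probability, as explained below.

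The core is the strict positivity, which I would prove by contradiction via a monogamy-of-correlations (no-cloning) argument. Because $\bar{\hat{\rho}}_{\nu}^{\rm det}$ is a classical--quantum state whose classical register $A$ carries the sifted bit $b=s_i\oplus s_{i+1}$, we have $H(A|E)_{\bar{\hat{\rho}}_{\nu}^{\rm det}}=0$ if and only if Eve's conditional states for $b=0$ and $b=1$ have orthogonal support, i.e.\ Eve learns $b$ with certainty from $E_2$. On the other hand, the hypothesis $P^{\rm err}=0$ means that Bob, from his system alone, reads $b$ with certainty at the announced timing $i$. If both Bob and Eve determine $b$ perfectly, then on the post-detection joint system $BE_2$ the branches $b=0$ and $b=1$ must lie in orthogonal subspaces of \emph{both} factors; consequently the post-detection vectors $(\sqrt{\Pi_i}\otimes\II_{E_2})V_\nu\ket{\psi_{\vec{s}\nu}}$ for strings of opposite $b$ are forced to be mutually orthogonal, where $\Pi_i$ is Bob's detection element for timing $i$.

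It then remains to contradict this forced orthogonality using the structure of the $\nu$-photon states, and here the hypothesis $\nu\le n-2$ enters decisively. As already exploited in the upper-bound argument of Sec.~\ref{sec:upperbound}, a block carrying $\nu\le n-2$ photons does not contain enough excitations to resolve all $n-1$ adjacent relative phases $s_j\oplus s_{j+1}$: single-photon detection can probe at most $\nu$ of the $n-1$ timings, so at least one timing is necessarily left unresolved. Concretely I would exhibit, for the announced timing $i$, two strings $\vec{s}^{(0)},\vec{s}^{(1)}$ with $s_i\oplus s_{i+1}$ equal to $0$ and $1$ respectively, each causing a detection at timing $i$, whose states satisfy $\langle\psi_{\vec{s}^{(0)}\nu}|\psi_{\vec{s}^{(1)}\nu}\rangle\ne0$; since $V_\nu$ is an isometry this overlap is preserved, contradicting the orthogonality forced above. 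I expect the main obstacle to be exactly the interplay between this overlap and the detection filtering: the element $\Pi_i$ acts only on Bob's factor and could in principle suppress the overlap, so the real work is to use the $P^{\rm err}=0$ constraint to pin down how $\Pi_i$ and $V_\nu$ must act on the detection subspace and thereby show that a nonzero overlap genuinely survives conditioning on detection, making the bound quantitative and stable as $\mathrm{Tr}(\hat{\rho}_\nu^{\rm det})\to0$. Identifying the precise combinatorial reason that the threshold sits at $n-2$ is the other place where I would expect most of the effort to go.
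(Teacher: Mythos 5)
Your outline shares two ingredients with the paper's proof --- the observation that $H(A|E)=0$ for a cq state forces Eve's conditional states onto disjoint supports (this is exactly the paper's Lemma~\ref{lmm:HAE_zero}), and a compactness argument to upgrade strict positivity to a uniform constant $H_n$ --- but the step you correctly identify as ``the real work'' is precisely where your mechanism fails, and it cannot be repaired in the form you propose. You want to contradict the forced orthogonality by exhibiting one pair $\vec{s}^{(0)},\vec{s}^{(1)}$ with opposite sifted bit and $\braket{\psi_{\vec{s}^{(0)}\nu}|\psi_{\vec{s}^{(1)}\nu}}\neq0$, arguing that the isometry preserves this overlap. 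But the isometry preserves only the overlap of the \emph{total} output vectors; Eve's detection-conditioned states are obtained by filtering, and unitarity merely gives
\begin{equation}
\braket{\psi_{\vec{s}^{(0)}\nu}|\psi_{\vec{s}^{(1)}\nu}}
=\braket{\psi_{\vec{s}^{(0)}1}|\psi_{\vec{s}^{(1)}1}}\braket{\varphi_{\vec{s}^{(0)}\nu}|\varphi_{\vec{s}^{(1)}\nu}}
+(\text{non-detection terms}),
\end{equation}
in the notation of Eq.~(\ref{eq:U_nu_def}). Eve is free to route the entire overlap into the vacuum and multi-photon branches, so for any \emph{fixed} pair the detection components $\ket{\varphi_{\vec{s}^{(0)}\nu}}$, $\ket{\varphi_{\vec{s}^{(1)}\nu}}$ can perfectly well be orthogonal (or zero) while $P^{\rm err}=0$ holds. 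No pairwise overlap argument can close the proof; the impossibility is inherently global.

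What does survive the filtering --- and this is the paper's key idea, absent from your proposal --- is \emph{linear dependence}. The constraint $P^{\rm err}=0$ forces the single-photon component of $\hat{U}_\nu(\ket{\psi_{\vec{s}\nu}}\otimes\ket{0}_{E_2})$ to equal $\ket{\psi_{\vec{s}1}}\otimes\ket{\varphi_{\vec{s}\nu}}$ exactly, so applying $(\bra{\psi_{\vec{t}1}}\otimes\hat{\II}_{E_2})\hat{U}_\nu$ to a relation $\ket{\psi_{\vec{t}\nu}}=\sum_{\vec{s}}d_{\vec{s}}\ket{\psi_{\vec{s}\nu}}$ shows that the $\ket{\varphi_{\vec{s}\nu}}$ inherit every linear dependency of the $\ket{\psi_{\vec{s}\nu}}$, whence $\dim{\rm Span}\{\ket{\varphi_{\vec{s}\nu}}\}_{\vec{s}}\le\dim{\rm Span}\{\ket{\psi_{\vec{s}\nu}}\}_{\vec{s}}$. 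The paper's Lemma~\ref{lmm:dim_span_psi} then bounds the latter by $2^{n-1}-1$ for $\nu\le n-2$ via a Fourier/parity argument (this, not your ``at most $\nu$ resolvable timings'' heuristic, is the combinatorial reason the threshold sits at $n-2$: the candidate Fourier component with $t_1=\cdots=t_{n-1}=1$ requires at least $n-1$ photons). Perfect distinguishability at every timing would force the $2^{n-1}$ states with distinct patterns $(s_1\oplus s_2,\dots,s_{n-1}\oplus s_n)$ to be mutually orthogonal, requiring dimension at least $2^{n-1}$ --- a contradiction at the level of the whole family, not of any single pair. Finally, the boundary problem you flag at ${\rm Tr}(\hat{\rho}_\nu^{\rm det})\to0$ is dissolved in the paper by a change of variables rather than by bounding overlaps: since $\bar{\hat{\rho}}_\nu^{\rm det}$ is invariant under a common rescaling of the tuple $\{\ket{\varphi_{\vec{s}\nu}}\}_{\vec{s}}$, one minimizes the (continuous) entropy over the compact set of tuples normalized so that ${\rm Tr}\,\hat{\rho}_\nu^{\rm det}=1$, subject only to the dimension bound, and no lower bound on the physical detection probability is ever needed.
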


We will prove this lemma in the next Sec.~\ref{sec:lemma1}. 
This lemma implies that if Bob frequently receives states from Alice containing at most $n-2$ photons, 
then the key rate becomes positive. 
To realize such a situation, it is sufficient that the probability of Alice emitting 
more than $n-2$ photon is smaller than $P^{\rm det}$, with which Bob announces detection.
This observation leads to the following corollary, or the second half of Theorem \ref{thm:main_theorem}.

\begin{Crl}
\label{crl:upper_bound_on_c}
For QBER $e=0$, we have
\begin{equation}
G(n,0,\eta)\ge \left(f^{\rm det}r-\frac{\mu^{n-1}}{(n-1)!}\right)H_n.
\label{eq:lmm_HAE_bounded_by_1-e}
\end{equation}
By setting $\mu=\eta^{1/(n-2)}$ and $f^{\rm det}>\frac{n}{n-1}\frac1{(n-1)!}$, we have
\begin{equation}
G(n,0,\eta)=\Omega(\eta^{1+\frac{1}{n-2}}).
\label{eq:coroG}
\end{equation}
\end{Crl}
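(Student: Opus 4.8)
The plan is to substitute Lemma~\ref{lmm:HAE_lower_bound} into the collective-attack key-rate formula \eqref{eq:collective_key_rate} and then tune the intensity $\mu$ against $\eta$ so that the gain from genuine detections outweighs the leakage from high-photon-number emissions. I would start from the right-hand side of \eqref{eq:collective_key_rate}. Since $e=0$, the acceptance constraint \eqref{eq:error_rate} forces $P^{\rm err}=0$ and Alice's and Bob's sifted keys are perfectly correlated, so $H(A|B)_{\hat\rho}=0$ and the minimand reduces to $H(A|E)_{\hat\rho}$. I would then chain the lower bound \eqref{eq:lower_bound_HAE}, which discards the $\nu=n-1$ term, and invoke Lemma~\ref{lmm:HAE_lower_bound} (valid under exactly the hypothesis $P^{\rm err}=0$) to replace every $H(A|E)_{\bar{\hat\rho}_\nu^{\rm det}}$ by the common constant $H_n$. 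This pulls $H_n$ out of the sum and gives $H(A|E)_{\hat\rho}\ge H_n\sum_{\nu=0}^{n-2}p'_{\mu,\nu}{\rm Tr}(\hat\rho_\nu^{\rm det})$, whose right-hand side is independent of Eve's attack, so the same bound survives the minimization over $\hat M^{\rm eav}$.

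Next I would identify the surviving sum. Writing $P^{\rm det}={\rm Tr}(\hat\rho^{\rm det})=\sum_{\nu=0}^{n-1}p'_{\mu,\nu}{\rm Tr}(\hat\rho_\nu^{\rm det})$ via \eqref{eq:p'}, the partial sum equals $P^{\rm det}$ minus the single $\nu=n-1$ contribution. Because ${\rm Tr}(\hat\rho_{n-1}^{\rm det})$ is a conditional detection probability and hence at most $1$, that contribution is at most $p'_{\mu,n-1}=\sum_{\nu\ge n-1}p_{\mu,\nu}$, which I would bound by the Poisson-tail estimate $p'_{\mu,n-1}\le \mu^{n-1}/(n-1)!$ (using $(n-1)!/(n-1+k)!\le 1/k!$ to resum the tail, i.e.\ $e^{-\mu}\cdot\frac{\mu^{n-1}}{(n-1)!}\cdot e^{\mu}$). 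Combining this with the acceptance condition $P^{\rm det}\ge f^{\rm det}r$ from \eqref{eq:detection_rate} yields exactly \eqref{eq:lmm_HAE_bounded_by_1-e}.

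Finally, for the scaling I would substitute $\mu=\eta^{1/(n-2)}$ into the expression for $r$ in \eqref{eq:reaching_fraction}. The point is that the useful rate scales as $r\propto \eta\mu=\eta^{1+1/(n-2)}$ while the harmful leakage scales as $\mu^{n-1}=\eta^{(n-1)/(n-2)}=\eta^{1+1/(n-2)}$, so the two exponents coincide; I can then factor out $\eta^{1+1/(n-2)}$ and rewrite the bracket in \eqref{eq:lmm_HAE_bounded_by_1-e} as $\big(f^{\rm det}\tfrac{n-1}{n}e^{-\eta\mu}-\tfrac{1}{(n-1)!}\big)\eta^{1+1/(n-2)}$. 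As $\eta\to0$ the factor $e^{-\eta\mu}\to1$ and the bracket tends to $f^{\rm det}\tfrac{n-1}{n}-\tfrac1{(n-1)!}$, which the hypothesis $f^{\rm det}>\tfrac{n}{n-1}\tfrac1{(n-1)!}$ makes strictly positive; since $H_n>0$, the whole lower bound is a positive constant times $\eta^{1+1/(n-2)}$ in the limit, giving \eqref{eq:coroG}.

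The genuine difficulty is entirely packed into Lemma~\ref{lmm:HAE_lower_bound}; granting it, the corollary is a short optimization. The one place demanding care is the competition of the two $\mu$-dependences: the multiphoton leakage $\sim\mu^{n-1}$ must stay subdominant to the detection rate $\sim\eta\mu$, and the choice $\mu=\eta^{1/(n-2)}$ is precisely what equalizes their $\eta$-exponents, so that the threshold on $f^{\rm det}$ alone controls the sign of the leading constant. I would also double-check that the $e^{-\eta\mu}$ correction truly converges to $1$ rather than perturbing the exponent, and that the Poisson-tail bound is sharp enough that no spurious factor degrades the scaling.
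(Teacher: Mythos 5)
Your proposal is correct and follows essentially the same route as the paper: the same chain \eqref{eq:collective_key_rate} $\to$ \eqref{eq:lower_bound_HAE} $\to$ Lemma~\ref{lmm:HAE_lower_bound}, the same identification of the partial sum via \eqref{eq:p'} with ${\rm Tr}(\hat\rho_{n-1}^{\rm det})\le 1$ and the same Poisson-tail bound $p'_{\mu,n-1}\le\mu^{n-1}/(n-1)!$, the same choice $\mu=\eta^{1/(n-2)}$, with the paper merely making your limiting argument explicit (via $e^x\ge1+x$ and explicit constants $\gamma,\eta_0,k$) instead of appealing to $e^{-\eta\mu}\to1$. One wording slip: the intermediate sum $\sum_{\nu=0}^{n-2}p'_{\mu,\nu}{\rm Tr}(\hat\rho_\nu^{\rm det})$ is \emph{not} attack-independent — attack-independence is obtained only after you apply the acceptance constraint \eqref{eq:detection_rate} and ${\rm Tr}(\hat\rho_{n-1}^{\rm det})\le1$, which your subsequent steps do correctly, so the argument stands.
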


\begin{proof}
From Eqs.~(\ref{eq:collective_key_rate}), (\ref{eq:lower_bound_HAE}), and (\ref{eq:HAE_nu_bounded_by_H}), 
and the relation $H(A|B)_{\hat{\rho}}=0$, it follows that
\begin{eqnarray}
G(n,0,\eta)&\ge& \sum_{\nu=0}^{n-2}p'_{\mu,\nu}{\rm Tr}\left(\hat{\rho}_{\nu}^{\rm det}\right) H_n.
\end{eqnarray}
This lower bound can be rewritten by using Eq.~(\ref{eq:p'}) as
\begin{align}
&\left[\sum_{\nu=0}^{n-1}{\rm Tr}\left(\ket{\nu}\bra{\nu}_{E_1}\hat{\rho}^{\rm det}\right)-
p_{\mu,n-1}'{\rm Tr}\left(\hat{\rho}_{n-1}^{\rm det}\right)\right]H_n\notag\\
=&\left[{\rm Tr}\left(\hat{\rho}^{\rm det}\right)-p_{\mu,n-1}'{\rm Tr}\left(\hat{\rho}_{n-1}^{\rm det}\right)\right]H_n,
\end{align}
and applying Eq.~(\ref{eq:detection_rate}) gives its lower bound as

\begin{align}
\left[f^{\rm det}r-p'_{\mu,n-1}{\rm Tr}\left(\hat{\rho}_{n-1}^{\rm det}\right)\right]H_n
\ge\left(f^{\rm det}r-p'_{\mu,n-1}\right)H_n.
\end{align}
Then by using the inequality 
$p'_{\mu,n-1}=\sum_{\nu=n-1}^\infty p_{\mu,\nu}\le \frac{\mu^{n-1}}{(n-1)!}$
\footnote{
This can be proven as follows. 
As 
\begin{align}
\binom{n+m-1}{n-1}\ge1\Leftrightarrow
n(n+1)(n+2)\cdots (n+m-1)\ge m!,
\end{align}
we have 
\begin{align}
n(n+1)(n+2)\cdots (n+m-1)\ge m!
\end{align}
for $m\ge1$. Using this leads to
\begin{align}
\sum_{\nu=n-1}^\infty p_{\mu,\nu}&=
e^{-\mu}\frac{\mu^{n-1}}{(n-1)!}\left(1+\frac{\mu}{n}+\frac{\mu^2}{n(n+1)}+\cdots\right)\\
&\le e^{-\mu}\frac{\mu^{n-1}}{(n-1)!}\left(1+\frac{\mu}{1!}+\frac{\mu^2}{2!}+\cdots\right)\\
&=\frac{\mu^{n-1}}{(n-1)!}.
\end{align}
}, we obtain Eq.~(\ref{eq:lmm_HAE_bounded_by_1-e}).

Next, we prove Eq.~(\ref{eq:coroG}). 
By setting $f^{\rm det}>\frac{n}{n-1}\frac1{(n-1)!}$, we can write 
$f^{\rm det}=\frac{n}{n-1}\frac1{(n-1)!}+\beta$ with $\beta>0$. 
Substituting this to Eq.~(\ref{eq:lmm_HAE_bounded_by_1-e}) leads to
\begin{align}
G(n,0,\eta)\ge\frac{\mu}{(n-1)!}\left[\eta e^{-\eta\mu}\left(1+\gamma\right)-\mu^{n-2}\right]H_n
\label{eq1:coro}
\end{align}
with $\gamma:=\beta(n-1)(n-1)!/n>0$. 
If we set $\mu=\eta^{1/(n-2)}$, we have that the right hand side of Eq.~(\ref{eq1:coro}) equals
\begin{align}
\frac{H_n}{(n-1)!}\left[e^{-\eta^{\frac{n-1}{n-2}}}\left(1+\gamma\right)-1\right]\eta^{\frac{n-1}{n-2}}.
\label{eq2:coro}
\end{align}
Since $e^x\ge 1+x$ holds for $x\in\mathbb{R}$, we have the lower bound on Eq.~(\ref{eq2:coro}) as
\begin{align}
\frac{H_n}{(n-1)!}\left[\gamma-\eta^{\frac{n-1}{n-2}}(1+\gamma)\right]\eta^{\frac{n-1}{n-2}}.
\label{eq3:coro}
\end{align}
We take some positive constant $\zeta>0$ such that
\begin{align}
\eta_0:=\left(\frac{\gamma}{1+\gamma}\right)^{\frac{n-2}{n-1}}-\zeta>0
\end{align}
holds and define $k$ as
\begin{align}
k&:=\frac{H_n}{(n-1)!}\left[\gamma-\eta_0^{\frac{n-1}{n-2}}(1+\gamma)\right]\notag\\
&>\frac{H_n}{(n-1)!}\left[\gamma-\left(\frac{\gamma}{1+\gamma}\right)(1+\gamma)\right]\notag\\
&=0.
\end{align}
Then, for any $\eta$ of $\eta<\eta_0$, we obtain
\begin{align}
G(n,0,\eta)&\ge
\frac{H_n}{(n-1)!}\left[\gamma-\eta^{\frac{n-1}{n-2}}(1+\gamma)\right]\eta^{\frac{n-1}{n-2}}\notag\\
&\ge\frac{H_n}{(n-1)!}\left[\gamma-\eta_0^{\frac{n-1}{n-2}}(1+\gamma)\right]\eta^{\frac{n-1}{n-2}}\notag\\
&=k\eta^{\frac{n-1}{n-2}}.
\end{align}
Using the symbol $\Omega$ defined in Eq.~(\ref{eq:Omega}), this inequality implies Eq.~(\ref{eq:coroG}). 
This completes the proof of  Corollary~\ref{crl:upper_bound_on_c}.
\end{proof}

\section{Proof of Lemma \ref{lmm:HAE_lower_bound}}
\label{sec:lemma1}
Here, we prove Lemma~\ref{lmm:HAE_lower_bound}, our main technical lemma to derive the lower bound on the 
key rate given in Eq.~(\ref{eq:coroG}). 
We initiate the discussion by recalling that any quantum operation can generally be written as adding an ancilla 
to the initial state and then applying a unitary transformation.
Thus any attack $\hat{M}_\nu^{\rm eav}$ by Eve on the initial $\nu$-photon state $\ket{\psi_{\vec{s}\nu}}_{B}$ 
can be described as applying an arbitrary unitary transformation $\hat{U}_{\nu}$ on 
$\ket{\psi_{\vec{s}\nu}}_{B}\otimes\ket{0}_{E_2}$.

As mentioned in the last paragraph of Sec.~\ref{sec:setting}, the resulting state 
$\hat{U}_{\nu}(\ket{\psi_{\vec{s}\nu}}_{B}\otimes\ket{0}_{E_2})$ will then be measured by Bob. 
Alice and Bob will generate a sifted key only when Bob measures exactly one photon in total. 
As we also restrict ourselves with the case of $e=0$, Bob should detect no errors when a detection event occurs. 
Hence the amplitude in $\hat{U}_{\nu}(\ket{\psi_{\vec{s}\nu}}_{B}\otimes\ket{0}_{E_2})$ corresponding to the sifted key generation must contain strictly one photon in total and cause no error, and thus equal $\ket{\psi_{\vec{s}1}}$. 
Therefore, we have for any $\vec{s}=s_1\dots s_n\in\{0,1\}^n$,
\begin{eqnarray}
\lefteqn{\hat{U}_{\nu}\left(\ket{\psi_{\vec{s}\nu}}_{B}\otimes\ket{0}_{E_2}\right)}\nonumber\\
&=&\ket{\psi_{\vec{s}1}}_B\otimes\ket{\varphi_{\vec{s}\nu}}_{E_2}+\sum_{j}\ket{v_j}_B
\otimes \ket{\chi_{\vec{s}\nu j}}_{E_2},
\label{eq:U_nu_def}
\end{eqnarray}
where $\ket{v_j}_B$ are zero- or multi-photon states, and $\ket{\varphi_{\vec{s}\nu}}$, $\ket{\chi_{\vec{s}\nu j}}$ are not necessarily normalized or orthogonalized.
After Bob measures this state and Alice generates the sifted key $b\in\{0,1\}$, 
the mixed state 
\begin{align}
\hat{\rho}_{\nu}:={\rm Tr}_B(\hat{\rho}_{\nu})_{ABE_2E_3},
\end{align}
with $(\hat{\rho}_{\nu})_{ABE_2E_3}$ 
defined in Eq.~(\ref{eq:rhonu}), of Alice's sifted key and Eve's system becomes
\begin{eqnarray}
\hat{\rho}_{\nu}&=&(\hat{\rho}_{\nu}^{\rm det})_{AE_2E_3}+(\hat{\rho}_{\nu}^{\bot})_{AE_2E_3},
\label{eq:rho_AE}
\end{eqnarray}
with
\begin{align}
(\hat{\rho}_{\nu}^{\rm det})_{AE_2E_3}&=
\sum_{1\le i\le n-1}\left(\hat{\rho}_{\nu i}^{\rm det}\right)_{AE_2}\otimes\ket{i}\bra{i}_{E_3},
\label{eq:rho_nu_det}\\
\left(\hat{\rho}_{\nu i}^{\rm det}\right)_{AE_2}&:=\sum_{b=0,1}\ket{b}\bra{b}_A\otimes(\hat{\rho}_{\nu bi}^{\rm det})_{E_2},
\label{eq:rho_nui_det}\\
\left(\hat{\rho}_{\nu bi}^{\rm det}\right)_{E_2}
&:=
\frac1{n}\frac1{2^n}\sum_{\substack{\vec{s}\ {\rm satisfying}\\{s_i\oplus s_{i+1}}=b}}
\ket{\varphi_{\vec{s}\nu}}\bra{\varphi_{\vec{s}\nu}}_{E_2},
\label{eq:rho_nubi_det}\\
(\hat{\rho}_{\nu}^\bot)_{AE_2E_3}&:=\ket{\bot}\bra{\bot}_A\otimes
\left(\hat{\rho}_{\nu}-\hat{\rho}_{\nu}^{\rm det}\right)_{E_{2}}\otimes\ket{\bot}\bra{\bot}_{E_3}
.
\label{eq:rho_nu_bot}
\end{align}
Here and henceforth, we omit systems in the subscript of the operators if there is no confusion. 
Here, $\hat{\rho}_{\nu}^{\rm det}$ ($\hat{\rho}_{\nu i}^{\rm det}$) is an unnormalized state when Bob obtains a detection event 
(an $i$th detection event). Also, $\hat{\rho}_{\nu bi}^{\rm det}$ represents an unnormalized state when Bob obtains an 
$i$th detection event and Alice's sifted key is $b$. Note that $1/2^n$ in Eq.~(\ref{eq:rho_nubi_det}) is the 
probability of Alice selecting $\vec{s}\in\{0,1\}^n$, 
and $1/n$ there is the probability of obtaining the $i$th detection event conditioned that 
Alice emits $\nu$ photons and the received block by Bob contains one photon. 
In Eq.~(\ref{eq:rho_nu_bot}), $\hat{\rho}_{\nu}^\bot$ is the unnormalized state for no detection. 
With these definitions, we see that ${\rm Tr}\hat{\rho}_{\nu}^{\rm det}$ is equal to the probability of 
obtaining a detection event conditional on Alice emitting $\nu$ photons.

With Eqs.~(\ref{eq:rho_AE})-(\ref{eq:rho_nu_bot}), the corresponding conditional entropy can be decomposed as
\begin{eqnarray}
\lefteqn{H(A|E)_{\hat{\rho}_{\nu}}=}\nonumber\\
&&{\rm Tr}(\hat{\rho}_{\nu}^{\rm det}) 
H(A|E)_{\bar{\hat{\rho}}_{\nu}^{\rm det}}+{\rm Tr}(\hat{\rho}_{\nu}^{\bot}) H(A|E)_{\bar{\hat{\rho}}_{\nu}^{\bot}},
\end{eqnarray}
and the second term on the right equals zero, $H(A|E)_{\bar{\hat{\rho}}_{\nu}^{\bot}}=0$, as can be seen from 
Eq.~(\ref{eq:rho_nu_bot}).

Thus it suffices to show that $H(A|E)_{\bar{\hat{\rho}}_{\nu}^{\rm det}}$ has a positive lower bound.
However, rather than tackling this problem directly, we will further simplify the problem. 
To this end, we first prove the following inequality:
\begin{equation}
\dim {\rm Span}\{\ket{\varphi_{\vec{s}\nu}}\}_{\vec{s}}\le \dim {\rm Span}\{\ket{\psi_{\vec{s}\nu}}\}_{\vec{s}}.
\label{eq:dim_span_varphi}
\end{equation}
Here, ${\rm Span}\{\ket{\varphi_{\vec{s}\nu}}\}_{\vec{s}}$ denotes the vector space spanned by 
$\{\ket{\varphi_{\vec{s}\nu}}\}_{\vec{s}}$, and dim$F$ denotes the dimension of vector space $F$. 
Equation~(\ref{eq:dim_span_varphi}) can be proven as follows:
Due to the general property of vector spaces, a certain subset of $\{\ket{\psi_{\vec{s}\nu}}\}_{\vec{s}}$ 
($\{\ket{\varphi_{\vec{s}\nu}}\}_{\vec{s}}$) constitutes a basis of ${\rm Span}\{\ket{\psi_{\vec{s}\nu}}\}_{\vec{s}}$ 
(${\rm Span}\{\ket{\varphi_{\vec{s}\nu}}\}_{\vec{s}}$). 
Let $S^{\psi}_\nu$ ($S^{\varphi}_\nu$) be the set of labels $\vec{s}$ of $\ket{\psi_{\vec{s}\nu}}$ ($\ket{\varphi_{\vec{s}\nu}}$) contained in such a subset. 
If $\vec{t}\notin S^{\psi}_\nu$, or equivalently, if $\ket{\psi_{\vec{t}\nu}}$ is not in the basis, it can be written 
with $d_{\vec{s}}\in\mathbb{C}$ as
\begin{equation}
\ket{\psi_{\vec{t}\nu}}=\sum_{\vec{s}\in S^{\psi}_\nu}d_{\vec{s}}\ket{\psi_{\vec{s}\nu}}.
\end{equation}
By considering the tensor-product of this equation and $\otimes\ket{0}_{E_2}$, and then multiplying it with $(\bra{\psi_{\vec{t}1}}\otimes\hat{\II}_{E_2})\hat{U}_{\nu}$ from the left and using Eq.~(\ref{eq:U_nu_def}), 
we obtain the expression of $\ket{\varphi_{\vec{t}\nu}}$ as a linear combination of $\{\ket{\varphi_{\vec{s}\nu}}\,|\,\vec{s}\in S^{\psi}_\nu\}$. 
This means that $\vec{t}\notin S^{\psi}_\nu$ implies $\vec{t}\notin S^{\varphi}_\nu$, which gives 
$\vec{t}\in S^{\varphi}_\nu\rightarrow \vec{t}\in S^{\psi}_\nu$ and $S^{\varphi}_\nu\subseteq S^{\psi}_\nu$ 
by contraposition. This results in $|S^{\varphi}_\nu|\le|S^{\psi}_\nu|$ and thus Eq.~(\ref{eq:dim_span_varphi}). 
Here, $|S|$ denotes the cardinality of set $S$.
About the dimension of vector space ${\rm Span}\{\ket{\psi_{\vec{s}\nu}}\}_{\vec{s}}$, we 
have the following lemma, which we will prove in Appendix \ref{sec:proof_Lmm_psi_dim}.
\begin{Lmm}
\label{lmm:dim_span_psi}
For any $\nu\le n-2$,
\begin{equation}
\dim {\rm Span}\{\ket{\psi_{\vec{s}\nu}}\}_{\vec{s}}<2^{n-1}.
\end{equation}
\end{Lmm}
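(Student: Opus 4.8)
My plan is to compute $\dim{\rm Span}\{\ket{\psi_{\vec s\nu}}\}_{\vec s}$ exactly and then compare it with $2^{n-1}$. First I would expand each $\ket{\psi_{\vec s\nu}}$ in the number (Fock) basis. Writing $\ket{\vec k}$ for the normalized state with $k_i$ photons in mode $i$ (a multi-index $\vec k$ with $|\vec k|:=\sum_i k_i=\nu$), the multinomial theorem applied to $\hat a_{\vec s}^{\dagger}=\tfrac1{\sqrt n}\sum_i(-1)^{s_i}\hat a_i^\dagger$ gives
\begin{equation}
\ket{\psi_{\vec s\nu}}=\sum_{|\vec k|=\nu}c_{\vec k}\,(-1)^{\sum_i s_i k_i}\ket{\vec k},
\end{equation}
where the coefficients $c_{\vec k}=n^{-\nu/2}\sqrt{\nu!/\prod_i k_i!}$ are strictly positive and, crucially, independent of $\vec s$.

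The key observation I would exploit next is that, since each $s_i\in\{0,1\}$, the sign $(-1)^{\sum_i s_i k_i}$ depends on $\vec k$ only through its parity vector $\bar{\vec k}:=(k_1\bmod 2,\dots,k_n\bmod 2)\in\{0,1\}^n$; that is, $(-1)^{\sum_i s_i k_i}=(-1)^{\vec s\cdot\bar{\vec k}}$. Because all $c_{\vec k}$ are nonzero, rescaling each Fock coordinate is an invertible diagonal operation and does not change the dimension of the span. Hence $\dim{\rm Span}\{\ket{\psi_{\vec s\nu}}\}_{\vec s}$ equals the rank of the $2^n\times\binom{n+\nu-1}{\nu}$ matrix with entries $(-1)^{\vec s\cdot\bar{\vec k}}$. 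Its columns are the characters $\vec s\mapsto(-1)^{\vec s\cdot\vec w}$ of $(\mathbb{Z}/2)^n$: columns sharing the same parity vector $\vec w=\bar{\vec k}$ coincide, while distinct $\vec w$ give orthogonal, hence linearly independent, characters. Therefore the rank equals the number of \emph{distinct} parity vectors $\vec w\in\{0,1\}^n$ realizable by some $\vec k\ge 0$ with $|\vec k|=\nu$.

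It then remains to count these realizable parity vectors. A vector $\vec w$ is realizable iff there is a nonnegative $\vec k$ with $|\vec k|=\nu$ and $k_i\equiv w_i\pmod 2$; the minimal admissible sum is the Hamming weight $|\vec w|$, and the sum can be raised in steps of two, so $\vec w$ is realizable exactly when $|\vec w|\le\nu$ and $|\vec w|\equiv\nu\pmod 2$. This yields the closed form
\begin{equation}
\dim{\rm Span}\{\ket{\psi_{\vec s\nu}}\}_{\vec s}=\sum_{\substack{0\le w\le\nu\\ w\equiv\nu\,(2)}}\binom{n}{w}.
\end{equation}
Finally I would compare with $2^{n-1}$ using the elementary identity $\sum_{\substack{0\le w\le n\\ w\equiv\nu\,(2)}}\binom{n}{w}=2^{n-1}$ (valid for $n\ge 1$): the displayed sum omits every term with $w>\nu$ of the correct parity, and for $\nu\le n-2$ the term $w=\nu+2$ satisfies $\nu+2\le n$ and contributes $\binom{n}{\nu+2}\ge 1$, so the sum is at most $2^{n-1}-1<2^{n-1}$, as claimed.

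I expect the only subtle point to be the reduction in the second paragraph, namely recognizing that the span is governed solely by the parity of the Fock multi-index and recasting the rank as a count of characters of $(\mathbb{Z}/2)^n$; everything after that is routine binomial bookkeeping.
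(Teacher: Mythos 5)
Your proof is correct, and while it rests on the same underlying parity observation as the paper's proof, its architecture is genuinely different. The paper Fourier-transforms the family over $\vec s$ (a Hadamard transform on $(\mathbb{Z}/2)^n$), observes that a transformed vector $\ket{\tilde\psi_{\vec t\nu}}$ can be nonzero only when the parity of ${\rm wt}(\vec t)$ matches that of $\nu$ — which caps the dimension at $2^{n-1}$ — and then obtains strictness by contradiction: if the dimension were $\ge 2^{n-1}$, every matching-parity vector would have to be nonzero, in particular the one with $t_1=\cdots=t_{n-1}=1$, which forces $\nu\ge n-1$. You instead work with the coefficient matrix directly: after factoring out the positive constants $c_{\vec k}$, the rank is the number of \emph{distinct realizable} parity vectors $\vec w$, characterized by $|\vec w|\le\nu$ and $|\vec w|\equiv\nu\pmod 2$, which yields the exact formula
\begin{equation}
\dim{\rm Span}\{\ket{\psi_{\vec s\nu}}\}_{\vec s}=\sum_{\substack{0\le w\le\nu\\ w\equiv\nu\,(2)}}\binom{n}{w},
\end{equation}
and strictness follows because the weight-$(\nu+2)$ parity class is unrealizable yet admissible in the $2^{n-1}$ count (it exists exactly when $\nu\le n-2$). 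The two routes are Fourier-dual to one another — your realizable parity vectors are precisely the $\vec t$ for which the paper's $\ket{\tilde\psi_{\vec t\nu}}\ne0$ — but yours buys more: an exact dimension formula rather than only an upper bound, with the contradiction replaced by a transparent binomial count; the paper's version, in exchange, avoids the rank/character-orthogonality machinery by manipulating the transformed states themselves.
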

Combining this lemma and Eq.~(\ref{eq:dim_span_varphi}) leads to
\begin{equation}
\dim {\rm Span}\{\ket{\varphi_{\vec{s}\nu}}\}_{\vec{s}}< 2^{n-1}
\label{eq:dim_varphi_2n-1}
\end{equation}
for any $\nu\le n-2$.
Based on Eq.~(\ref{eq:dim_varphi_2n-1}), our goal in the proof of Lemma \ref{lmm:HAE_lower_bound} 
can be simplified as follows. 

States $\{\ket{\varphi_{\vec{s}\nu}}\}_{\vec{s}}$ in Eq.~(\ref{eq:U_nu_def}) are shown to satisfy
Eq.~(\ref{eq:dim_varphi_2n-1}).
By substituting these states into Eqs.~(\ref{eq:rho_nu_det}), (\ref{eq:rho_nui_det}), and (\ref{eq:rho_nubi_det}), 
we obtain $\hat{\rho}_{\nu}^{\rm det}$; and by normalizing it, we obtain $\bar{\hat{\rho}}_{\nu}^{\rm det}$.
Our goal is then to show that $H(A|E)_{\bar{\hat{\rho}}_{\nu}^{\rm det}}$ for such $\bar{\hat{\rho}}_{\nu}^{\rm det}$ has a positive lower bound.

Suppose that we relax the constraints on $\{\ket{\varphi_{\vec{s}\nu}}\}_{\vec{s}}$ by imposing only Eq.~(\ref{eq:dim_varphi_2n-1}) rather than imposing Eqs.~(\ref{eq:U_nu_def}) and (\ref{eq:dim_varphi_2n-1}). 
If we can still prove under this relaxed constraint that 
the corresponding $H(A|E)_{\bar{\hat{\rho}}_{\nu}^{\rm det}}$ has a positive lower bound, then 
that will also be sufficient for our goal
\footnote{
More precisely, once we prove 
$H(A|E)_{\bar{\hat{\rho}}_{\nu}^{\rm det}}\ge C$ for $C>0$ with $\bar{\hat{\rho}}_{\nu}^{\rm det}$ under 
only Eq.~(\ref{eq:dim_varphi_2n-1}), then 
actual $H(A|E)_{\bar{\hat{\rho}}_{\nu}^{\rm det}}$ with $\bar{\hat{\rho}}_{\nu}^{\rm det}$ under both 
Eqs.~(\ref{eq:U_nu_def}) and (\ref{eq:dim_varphi_2n-1}) also satisfies $H(A|E)_{\bar{\hat{\rho}}_{\nu}^{\rm det}}\ge C$.
}. 
As this is in fact possible, hereafter we will set it as a new goal. 
To restate this new goal:
For any $d<2^{n-1}$, suppose that a set of states
\begin{align}
&V_d=\{(\ket{\varphi_{(0,\dots,0,0)\nu}},\ket{\varphi_{(0,\dots,0,1)\nu}},\dots,\ket{\varphi_{(1,\dots,1,1)\nu}})\,|\nonumber\\
&\quad\sum_{\vec{s}}\frac{\braket{\varphi_{\vec{s}\nu}|\varphi_{\vec{s}\nu}}}{2^n}
=\frac{n}{n-1},\,\dim {\rm Span}\{\ket{\varphi_{\vec{s}\nu}}\}_{\vec{s}}=d\},
\end{align}
is given and the corresponding $\hat{\rho}^{\rm det}_\nu$ is defined by substituting them into Eqs. (\ref{eq:rho_nu_det}), and (\ref{eq:rho_nui_det}), (\ref{eq:rho_nubi_det}).
The condition $\sum_{\vec{s}}
\frac{\braket{\varphi_{\vec{s}\nu}|\varphi_{\vec{s}\nu}}}{2^n}=\frac{n}{n-1}$ is meant 
to guarantee that ${\rm Tr}\hat{\rho}_{\nu}^{\rm det}=1$
\footnote{
This can be confirmed from Eqs.~(\ref{eq:rho_nu_det})-(\ref{eq:rho_nubi_det}) as
\begin{align}
{\rm Tr}\hat{\rho}_\nu^{\rm det}&=\sum_{i=1}^{n-1}\sum_{b=0,1}
\sum_{\{\vec{s}\,|\, s_i\oplus s_{i+1}=b\}}
\frac{\expect{\varphi_{\vec{s}\nu}|\varphi_{\vec{s}\nu}}}{n2^n}\\
&=\sum_{i=1}^{n-1}\frac1{n}\left(\sum_{\vec{s}}\frac{\expect{\varphi_{\vec{s}\nu}|\varphi_{\vec{s}\nu}}}{2^n}\right)\\
&=\sum_{i=1}^{n-1}\frac1{n}\frac{n}{n-1}=1.
\end{align}
}.

Then our goal now is to prove that $H(A|E)_{\hat{\rho}_{\nu}^{\rm det}}$ for such $\hat{\rho}^{\rm det}_\nu$ 
has a positive lower bound. Set $V_d$ defined above is compact.
$H(A|E)_{\hat{\rho}_{\nu}^{\rm det}}$ can be regarded a continuous map that maps $\hat{\rho}_{\nu}^{\rm det}$ 
to a real value; see, e.g., Ref. \cite{Watrous_2018}, Sec. 5.2.2. 
Thus the lower bound on $H(A|E)_{\hat{\rho}_{\nu}^{\rm det}}$ is in fact the minimum.
Therefore it suffices to show that the minimum is positive.

Suppose on the contrary that the minimum is zero.
Then it follows that $H(A|E)_{\hat{\rho}_{\nu}^{\rm det}}=\frac1{n-1}\sum^{n-1}_{i=1} 
H(A|E_2)_{\bar{\hat{\rho}}^{\rm det}_{\nu i}}=0$.
As $\bar{\hat{\rho}}^{\rm det}_{\nu i}$ are cq states, it also follows that $H(A|E_2)_{\bar{\hat{\rho}}^{\rm det}_{\nu i}}\ge0$ \cite{9781107002173}, and thus $H(A|E_2)_{\bar{\hat{\rho}}^{\rm det}_{\nu i}}=0$ for all $i$.

Further, note that the following lemma holds, which we will prove in Appendix \ref{sec:proof_Lmm_HAE_zero}.
\begin{Lmm}
\label{lmm:HAE_zero}
If $H(A|E)_{\hat{\sigma}}=0$ holds for a given cq state
\begin{equation}
\hat{\sigma}=\sum_{b=0,1}\ket{b}\bra{b}_A\otimes(\hat{\sigma}_{b})_{E},
\end{equation}
then the supports of $\hat{\sigma}_{0}$ and $\hat{\sigma}_{1}$ are disjoint.
\end{Lmm}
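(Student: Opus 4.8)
The plan is to prove the statement directly by identifying $H(A|E)_{\hat\sigma}=0$ with the saturation of the Holevo bound and then reading off orthogonality from the equality case. First I would dispose of the degenerate situation: writing $p_b:=\mathrm{Tr}\,\hat\sigma_b$, if $p_0=0$ (or $p_1=0$) then $\hat\sigma_0=0$, its support is the zero subspace, and disjointness is trivial; hence assume $p_0,p_1>0$ and introduce the normalized conditional states $\bar{\hat\sigma}_b:=\hat\sigma_b/p_b$ together with the average $\hat\rho:=\mathrm{Tr}_A\hat\sigma=p_0\bar{\hat\sigma}_0+p_1\bar{\hat\sigma}_1$. From the block structure of $\hat\sigma$ one computes $H(AE)_{\hat\sigma}=H(\{p_b\})+\sum_b p_b H(\bar{\hat\sigma}_b)$ and $H(E)_{\hat\sigma}=H(\hat\rho)$, so that
\[
H(A|E)_{\hat\sigma}=H(\{p_b\})-\chi,\qquad \chi:=H(\hat\rho)-\sum_b p_b H(\bar{\hat\sigma}_b),
\]
with $\chi$ the Holevo quantity of the ensemble $\{p_b,\bar{\hat\sigma}_b\}$. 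Thus $H(A|E)_{\hat\sigma}=0$ is equivalent to $\chi=H(\{p_b\})$, and the lemma reduces to the equality condition of the Holevo bound.

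Second, I would reprove $\chi\le H(\{p_b\})$ in a form that tracks the equality. Using the identity $\chi=\sum_b p_b D(\bar{\hat\sigma}_b\|\hat\rho)$ with $D(\cdot\|\cdot)$ the quantum relative entropy, and the operator inequality $\hat\rho\ge p_b\bar{\hat\sigma}_b$ (since the discarded term is positive), the fact that $D(\tau\|\cdot)$ is non-increasing in the operator order of its second argument (a consequence of operator monotonicity of $\log$) gives $D(\bar{\hat\sigma}_b\|\hat\rho)\le D(\bar{\hat\sigma}_b\|p_b\bar{\hat\sigma}_b)=-\log p_b$. Summing yields $\chi\le\sum_b p_b\log(1/p_b)=H(\{p_b\})$; this both reproves $H(A|E)_{\hat\sigma}\ge0$ and pins down precisely which inequalities must be tight when the lemma's hypothesis holds.

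Third, the crux is to turn the scalar saturation into an operator statement. Equality forces $D(\bar{\hat\sigma}_b\|\hat\rho)=D(\bar{\hat\sigma}_b\|p_b\bar{\hat\sigma}_b)$ for each $b$; writing the difference as $\mathrm{Tr}\!\left[\bar{\hat\sigma}_b\bigl(\log\hat\rho-\log(p_b\bar{\hat\sigma}_b)\bigr)\right]$ with the bracket a positive operator on $\mathrm{supp}\,\bar{\hat\sigma}_b$, the vanishing of the trace of a product of two positive operators forces $\log\hat\rho=\log(p_b\bar{\hat\sigma}_b)$ on $\mathrm{supp}\,\bar{\hat\sigma}_b$. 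Exponentiating shows $\mathrm{supp}\,\bar{\hat\sigma}_b$ is invariant under $\hat\rho$ with $\hat\rho\,\Pi_b=p_b\bar{\hat\sigma}_b$, where $\Pi_b$ projects onto $\mathrm{supp}\,\bar{\hat\sigma}_b$. Substituting $\hat\rho=p_0\bar{\hat\sigma}_0+p_1\bar{\hat\sigma}_1$ into $\hat\rho\,\Pi_0=p_0\bar{\hat\sigma}_0$ gives $p_1\bar{\hat\sigma}_1\Pi_0=0$, hence $\bar{\hat\sigma}_1\Pi_0=0$, i.e. $\mathrm{supp}\,\hat\sigma_0\perp\mathrm{supp}\,\hat\sigma_1$; in particular the supports are disjoint, as claimed.

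The main obstacle I anticipate is precisely this last step: converting the scalar identity $\chi=H(\{p_b\})$ into the operator relation $\hat\rho\,\Pi_b=p_b\bar{\hat\sigma}_b$. The subtlety is that $\hat\rho$ and $\bar{\hat\sigma}_b$ need not commute and that $\log(p_b\bar{\hat\sigma}_b)$ is defined only on a subspace, so one must be careful with supports both when asserting $\log\hat\rho\ge\log(p_b\bar{\hat\sigma}_b)$ and when passing from equality of the logarithms on $\mathrm{supp}\,\bar{\hat\sigma}_b$ back to the invariance relation. If one prefers to sidestep these operator-logarithm technicalities, the same conclusion follows by directly invoking the known equality condition of the mixing inequality $H\!\left(\sum_b p_b\bar{\hat\sigma}_b\right)\le H(\{p_b\})+\sum_b p_b H(\bar{\hat\sigma}_b)$, whose saturation is exactly orthogonality of the supports.
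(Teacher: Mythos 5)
Your proposal is correct, and its first reduction coincides with the paper's: both compute, for the cq state, $H(A|E)_{\hat\sigma}=H(\{p_b\})+\sum_b p_b H(\bar{\hat\sigma}_b)-H\bigl(\sum_b p_b\bar{\hat\sigma}_b\bigr)$, so that the hypothesis $H(A|E)_{\hat\sigma}=0$ becomes saturation of the mixing inequality $H\bigl(\sum_b p_b\bar{\hat\sigma}_b\bigr)\le h_2(p_0)+\sum_b p_b H(\bar{\hat\sigma}_b)$. Where you genuinely diverge is in handling that saturation: the paper simply cites Theorem 11.10 of Nielsen and Chuang, which states that equality holds if and only if the $\bar{\hat\sigma}_b$ have support on orthogonal subspaces, whereas you re-derive the equality condition from scratch, writing the Holevo quantity as $\sum_b p_b D(\bar{\hat\sigma}_b\|\hat\rho)$, bounding each term by $-\log p_b$ via $\hat\rho\ge p_b\bar{\hat\sigma}_b$ and operator monotonicity of $\log$, and extracting from saturation the operator identity $\hat\rho\,\Pi_b=p_b\bar{\hat\sigma}_b$, which immediately gives $\bar{\hat\sigma}_1\Pi_0=0$. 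Your route is self-contained and yields strictly more information (the explicit invariance relation), while the paper's route is a one-line citation that hides the same work inside a textbook proof. The price is exactly the technicality you flag: from $\mathrm{Tr}\bigl[\bar{\hat\sigma}_b\bigl(\log\hat\rho-\log(p_b\bar{\hat\sigma}_b)\bigr)\bigr]=0$ one may conclude that the compression $\Pi_b(\log\hat\rho)\Pi_b$ agrees with $\log(p_b\bar{\hat\sigma}_b)$ on $\mathrm{supp}\,\bar{\hat\sigma}_b$, but exponentiating a compressed logarithm does not produce the compressed operator, so the passage to $\hat\rho\,\Pi_b=p_b\bar{\hat\sigma}_b$ as written is a gap. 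It is repairable by standard means, e.g.\ via the integral representation $\log A-\log B=\int_0^\infty\bigl[(B+t)^{-1}-(A+t)^{-1}\bigr]\,dt$ with $A=\hat\rho$, $B=p_b\bar{\hat\sigma}_b$: the integrand is positive semidefinite since $A\ge B$, so vanishing of the trace against $B$ forces $\bigl[(B+t)^{-1}-(A+t)^{-1}\bigr]B=0$ for almost every $t$, hence $(A-B)(B+t)^{-1}B=0$, and letting $t\to0$ gives $(A-B)\Pi_b=0$, which is the desired identity. With that repair your argument is complete; your stated fallback of invoking the known equality condition of the mixing inequality is, word for word, the paper's proof.
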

Then we see that, for each $i$, states $\hat{\rho}^{\rm det}_{\nu 0i}$ and 
$\hat{\rho}^{\rm det}_{\nu 1i}$ must have disjoint supports.
This means that, states $\ket{\varphi_{\vec{s}\nu}}$ and $\ket{\varphi_{\vec{s'}\nu}}$ are orthogonal to each other if 
$s_i\oplus s_{i+1}$ and $s'_i\oplus s'_{i+1}$ differ for any of $i=1,\dots,n-1$.
This implies that 
among 
$\{\ket{\varphi_{\vec{s}\nu}}\}_{\vec{s}}$, $\ket{\varphi_{\vec{s}\nu}}$ with different $(s_1\oplus s_2,s_2\oplus s_3,\cdots,
s_{n-1}\oplus s_n)$ must become basis states of vector space $\{\ket{\varphi_{\vec{s}\nu}}\}_{\vec{s}}$. 
The number of such $\vec{s}$ is $2^{n-1}$, and hence 
$\{\ket{\varphi_{\vec{s}\nu}}\}_{\vec{s}}$ includes no less than $2^{n-1}$ linearly independent elements, namely, 
\begin{align}
{\rm dimSpan}\{\ket{\varphi_{\vec{s}\nu}}\}_{\vec{s}}\ge2^{n-1}.
\end{align}
This contradicts Eq.~(\ref{eq:dim_varphi_2n-1}) and completes the proof of Lemma \ref{lmm:HAE_lower_bound}.

\section{Conclusion and Discussion}
In this paper, we have derived upper and lower bounds on the key rate for the DPS protocol with block-wise phase-randomized coherent states and revealed that these bounds coincide in the high-loss regime. When the number of pulses constituting a block is sufficiently large, the key rate becomes linearly proportional to channel transmission, which is the same scaling as the decoy BB84 and the theoretical limit of the QKD protocol. 
Therefore, our results suggest that the DPS protocol possesses both advantages 
of simple implementation and a high key generation rate. 
Given that the DPS protocol is easier to implement than other major protocols, 
our results strongly suggest that it is particularly promising and advantageous.

As a future work, it would interesting to similarly derive tight scalings of key rates for other variants of the DPS protocol. 
These variants include those without block-wise phase randomization on Alice's side, studied in Refs.~\cite{tdps1,tdps2,tdps3}, those using threshold detectors on Bob's side, 
the round-robin DPS protocol~\cite{2014Natur.509..475S}, 
the small-number-random DPS protocol~\cite{hatake2017}, and the differential quadrature phase shift protocol~\cite{dqps}. 
Moreover, it would 
be interesting to explore the mathematical technique developed in this paper not only to determine the scaling, but also 
to ascertain the values of the key rate $G$ itself in general situations.

\section*{Acknowledgments}
The authors thank Kiyoshi Tamaki, Yuki Takeuchi, and Toshihiko Sasaki for helpful comments.
A.M. is partially supported by JST, ACT-X Grant No. JPMJAX210O, Japan and 
by JSPS KAKENHI Grant Number JP24K16977. 
T.T. was supported in part by ``ICT Priority Technology Research and Development Project'' (JPMI00316) of the Ministry of Internal Affairs and Communications, Japan.

\appendix

\section{Rigorous definition of the key rate}
\label{sec:def_key_rate_exponent}
Here, we rigorously define the key rate, which was introduced in Def.~\ref{dfn:key_rate_exponent}.
We define $\ell(n,e,\eta,\varepsilon,N)$ to be the maximum length of the secret key that can be generated with block length $n$, QBER $e$, channel transmission $\eta$, security parameter $\varepsilon$, and the total number of emitted blocks $N$. 
More precisely, $\ell(n,e,\eta,\varepsilon,N)$ is the maximum number of $\varepsilon$-secure secret key bits 
that Alice and Bob can extract by using the protocol with parameters $n,N$, against any attacks by Eve in which the pulse arrival rate to Bob equals that in the absence of Eve (for the present protocol, $r$) and in which QBER=$e$. 
In general, $\ell(n,e,\eta,\varepsilon,N)$ is monotonically increasing with $\eta,\varepsilon$ and monotonically decreasing with $e$.

The asymptotic key rate corresponding to the key length $\ell$ above with the block length $N\to\infty$ is
\begin{equation}
G(n,e,\eta,\varepsilon)=\sup_N \frac{\ell(n,e,\eta,\varepsilon,N)}{N}.
\label{eq:asymptotic_key_rate}
\end{equation}
This key rate is always finite since the quantity inside sup on the right hand side is bounded. 
We also define the key rate in the limit $\varepsilon\to0$ as
\begin{equation}
G(n,e,\eta)=\lim_{\varepsilon\to0}G(n,e,\eta,\varepsilon)=\lim_{\varepsilon\to0}\sup_N \frac{\ell(n,e,\eta,\varepsilon,N)}{N}.
\label{eq:zero_epsilon_limit_key_rate}
\end{equation}
Note that this version of the key rate is independent of $\varepsilon$. 
This definition is meant to incorporate the typical situation in QKD protocols where one can achieve an arbitrarily small security parameter $\varepsilon$ by choosing the block length $N$ be sufficiently large, while keeping $\ell\ge GN$ constant.
It should be noted that $\lim_{\varepsilon\to0}$ in Eq.~(\ref{eq:zero_epsilon_limit_key_rate}) is always convergent since 
$G(n,e,\eta,\varepsilon)$ is nonnegative, and monotonically increasing with $\varepsilon$.

The Landau symbol $O(\eta^c)$ appearing in Eq.~(\ref{eq:upper}) is defined as
\begin{align}
&G(n,e,\eta)=O(\eta^{c})\nonumber\\
&\Leftrightarrow\,\exists k>0, \eta_0>0,\,\forall \eta<\eta_0,\,G(n,e,\eta)\le k\eta^{c}.
\label{eq:landau}
\end{align}
This roughly means that $G$ converges {\it faster} than $\eta^{c}$. 

On the contrary, the symbol $\Omega(\eta^c)$ appearing in Eq.~(\ref{eq:lower}) is defined as
\begin{align}
&G(n,e,\eta)=\Omega(\eta^{c})\nonumber\\
&\Leftrightarrow\,\exists k>0, \eta_0>0,\,\forall \eta<\eta_0,\,G(n,e,\eta)\ge k\eta^{c}.
\label{eq:Omega}
\end{align}
This roughly means that $G$ converges {\it slower} than $\eta^{c}$.

\section{Variable-length protocols}
\label{sec:appVL}
In our analysis in Sec.~\ref{subsec:ana}, 
we assumed that Alice and Bob abort the protocol when the measured values of detection rate $P^{\rm det}$ and error occurrence probability $P^{\rm err}$ deviate from a certain tolerance range $f^{\rm det}, f^{\rm err }$  (see 
Sec.~\ref{sec:collective_key_distillation}).
This type of protocols is useful when the quantum channel is relatively stable and $P^{\rm det}$ and $P^{\rm err}$ fluctuate moderately.
However, in practice, there are of course situations where the channel is unstable and $P^{\rm det}$ and $P^{\rm err}$ fluctuate significantly from round to round.

In order to reduce the probability of protocol abortion and optimize the average capacity in such unstable situations, some protocols have been devised in which Alice and Bob  adaptively change their parameters of key distillation (the size of the error correction code and the length of sacrificed bits in privacy amplification) depending on the observed values of $P^{\rm det}$ and $P^{\rm err}$; see, e.g., Ref. \cite{Hayashi_2012} for the case of the BB84 protocol.
These protocols are often called variable length protocols.

We note that our analysis above also carries over to variable length protocols and yields essentially the same results.
This can be done as follows.

For simplicity, set $f^{\rm err}=0$ as before.
On the other hand, as for the pulse arrival rate, choose multiple values $f^{\rm det}_1,f^{\rm det}_2,\dots,f ^{\rm det}_m$, and prove the $\varepsilon$-security respectively for $m$ protocols each of which employs one of $f^{\rm det}_i$, by using the same argument as above.
Then no matter which $f^{\rm det}_i$ Alice and Bob choose after the completion of quantum communication, $m\varepsilon$-security can be guaranteed.
This of course guarantees the $m\varepsilon$-security for the case where Alice and Bob choose the optimal $f^{\rm det}_i$ based on the measured values of $P^{\rm det}$ and $P^{\rm err}$.
That is, $m\varepsilon$-security can be achieved for variable length protocols.

Then by taking the limit $N\to\infty$, one can show that the corresponding asymptotic key rates converge to those obtained in this section.
\\

\section{Proof of Lemma \ref{lmm:dim_span_psi}}
\label{sec:proof_Lmm_psi_dim}
Here, we prove Lemma \ref{lmm:dim_span_psi}, which is copied below for convenience.
\\
{\bf Lemma~2.}~~{\it 
For any $\nu\le n-2$,}
\begin{equation}
\dim {\rm Span}\{\ket{\psi_{\vec{s}\nu}}\}_{\vec{s}}<2^{n-1}.
\end{equation}
\begin{proof}
We first define the Fourier transform of $\{\ket{\psi_{\vec{s}\nu}}\}_{\vec{s}}$ as
\begin{equation}
\ket{\tilde{\psi}_{\vec{t}\nu}}:=\frac1{\sqrt{2^n}}\sum_{\vec{s}}(-1)^{\vec{t}\cdot \vec{s}}\ket{\psi_{\vec{s}\nu}},
\label{eq:tilde_psi_def}
\end{equation}
where $\vec{t}=(t_1,\dots,t_n)\in\{0,1\}^n$.
Since the inverse Fourier transform can also be defined  similarly, we have $\dim {\rm Span}\{\ket{\psi_{\vec{s}\nu}}\}_{\vec{s}}=\dim {\rm Span}\{\ket{\tilde{\psi}_{\vec{t}\nu}}\}_{\vec{t}}$.
Thus it suffices to show that $\dim {\rm Span}\{\ket{\tilde{\psi}_{\vec{t}\nu}}\}_{\vec{t}}\ge 2^{n-1}$ holds when $\nu\ge n-1$.

By expanding Eq.~(\ref{eq:tilde_psi_def}), we have
\begin{eqnarray}
\ket{\tilde{\psi}_{\vec{t}\nu}}&=&\frac1{\sqrt{2^n}}\sum_{\vec{s}}(-1)^{\vec{t}\cdot\vec{s}}\frac1{\sqrt{\nu!}}
\nonumber\\
&&\times\sum_{\substack{m_1,\dots,m_n\ {\rm s.t.}\\ m_1+\cdots+m_n=\nu}}\frac{\nu!}{m_1!\cdots m_n!}\prod_{i=1}^{n}\left(\frac{(-1)^{s_i}\hat{a}_i^\dagger}{\sqrt{n}}\right)^{m_i}\ket{0}\nonumber\\
&=&\frac1{\sqrt{\nu!2^n n^{\nu}}}\sum_{\substack{m_1,\dots,m_n\ {\rm s.t.} \\ m_1+\cdots+m_n=\nu}}
\frac{\nu!}{m_1!\cdots m_n!}
\nonumber\\
&&\times\prod_{i=1}^{n}\left(\sum_{s_i=0,1}(-1)^{s_i(t_i+m_i)}\hat{a}_i^{\dagger m_i}\right)\ket{0}\nonumber\\
&=&\sqrt{\frac{2^n}{\nu!n^{\nu}}}\sum_{\substack{m_1,\dots,m_n\ {\rm s.t.} \\ m_1+\cdots+m_n=\nu}}\frac{\nu!}{m_1!\cdots m_n!}
\nonumber\\
&&\times\prod_{i=1}^{n}\left(1[t_i+m_i=0\ {\rm mod}\ 2]\hat{a}_i^{\dagger m_i}\right)\ket{0},
\label{eq:plemma3}
\end{eqnarray}
where ``$1[P]$'' denotes the function that equals 1 when condition $P$ holds, and 0 otherwise.
From Eq.~(\ref{eq:plemma3}), $\ket{\tilde{\psi}_{\vec{t}\nu}}$ is a linear combination of terms 
$\hat{a}_1^{\dagger m_1}a_2^{\dagger m_2}\cdots \hat{a}_n^{\dagger m_n}$.
Note that, for the coefficient of 
$\hat{a}_1^{\dagger m_1}\hat{a}_2^{\dagger m_2}\cdots \hat{a}_n^{\dagger m_n}$ in 
$\ket{\tilde{\psi}_{\vec{t}\nu}}$ to be nonzero, 
$m_i$ must all satisfy $t_i+m_i=0$ mod 2;
that is, it must hold for all $i$ that $t_i$ and $m_i$ have the same parity (even or odd).
Hence, for any $\nu\ge0$, the necessary condition of $\ket{\tilde{\psi}_{\vec{t}\nu}}\neq0$ is 
${\rm Par}[\nu]={\rm Par[\rm wt}(\vec{t})]$, and we have
\begin{align}
&|\{\vec{t}\in\{0,1\}^n|\ket{\tilde{\psi}_{\vec{t}\nu}}\neq0\}|\notag\\
\le&|\{\vec{t}\in\{0,1\}^n|{\rm Par}[\nu]={\rm Par[\rm wt}(\vec{t})]\}|=2^n/2.
\end{align}
Here, $|S|$ denotes the cardinality of set $S$.
Here, Par$[\cdot]$ and wt$(\cdot)$ denote the parity and the Hamming weight, respectively. 
This means that at least half of the vectors $\{\ket{\tilde{\psi}_{\vec{t}\nu}}\}_{\vec{t}}$ result in zero. 
If we assume (on the contrary to what we aim to prove) that $\dim {\rm Span}\{\ket{\tilde{\psi}_{\vec{t}\nu}}\}_{\vec{t}}\ge 2^{n-1}$, the half of the vectors $\vec{t}$ with different parity of $\nu$ vanish, which implies that 
the state $\ket{\tilde{\psi}_{\vec{t}\nu}}$ with $t_1=\cdots=t_{n-1}=1$ and with $t_n$ properly chosen must be nonzero. 
Such $\ket{\tilde{\psi}_{\vec{t}\nu}}$ consists of terms 
$\hat{a}_1^{\dagger m_1}\hat{a}_2^{\dagger m_2}\cdots\hat{a}_n^{\dagger m_n}$ with odd exponents 
$m_1,\dots,m_{n-1}$. This implies that $m_1,\dots,m_{n-1}\ge 1$ and thus $\nu\ge n-1$.
This completes the proof.
\end{proof}

\section{Proof of Lemma \ref{lmm:HAE_zero}}
\label{sec:proof_Lmm_HAE_zero}
In this section, we prove Lemma \ref{lmm:HAE_zero}, copied below for convenience.\\
{\bf Lemma~3.}~~{\it
If $H(A|E)_{\hat{\sigma}}=0$ holds for a given cq state}
\begin{equation}
\hat{\sigma}=\sum_{b=0,1}\ket{b}\bra{b}_A\otimes(\hat{\sigma}_{b})_{E},
\end{equation}
{\it then the supports of $\hat{\sigma}_{0}$ and $\hat{\sigma}_{1}$ are disjoint.
}\\
\begin{proof}
If $H(A|E)_{\hat{\sigma}}=0$, from the definition of the conditional entropy in Eq.~(\ref{eq:defCE}), we have
\begin{align}
H(E)_{\hat{\sigma}}=H(AE)_{\hat{\sigma}}\Longleftrightarrow H({\rm Tr}_A\hat{\sigma})=H(\hat{\sigma}),
\end{align}
which leads to
\begin{eqnarray}
H\left(\sum_{b=0,1}p_b\bar{\hat{\sigma}}_b\right)&=&\sum_{b=0,1} p_bH(\bar{\hat{\sigma}}_b)+h_2(p_0).
\end{eqnarray}
Here, we define
\begin{eqnarray}
p_b&=&{\rm Tr}(\hat{\sigma}_b),\\
\bar{\hat{\sigma}}_b&=&p_b^{-1}\hat{\sigma}_b,
\end{eqnarray}
and $h_2$ denotes the binary entropy.
Then due to Theorem 11.10 of Ref. \cite{9781107002173}, $\hat{\sigma}_0$ and $\hat{\sigma}_1$ 
have support on orthogonal subspaces.
\end{proof}

\end{document}